\newtheorem{theorem}{Theorem}[section]
\newtheorem{lemma}{Lemma}[section]
\newtheorem{corollary}{Corollary}[section]
\newtheorem{remark}{Remark}[section]
\newtheorem{definition}{Definition}[section]
\newtheorem{example}{Example}[section]
\definecolor{darkblue}{rgb}{0.0,0.0,0.6}
\title{\bf Compounding Effects in Leveraged ETFs: \\ Beyond the Volatility Drag Paradigm\thanks{This manuscript is a preprint and is currently under review for publication.}}
\author[1]{Chung-Han Hsieh}
\author[1]{Jow-Ran Chang}
\author[1]{Hui Hsiang Chen}
\affil[1]{\small Department of Quantitative Finance, National Tsing Hua University, Hsinchu, Taiwan, 30004}
\date{}
\begin{document}

\maketitle
%\footnotetext{This manuscript is a preprint and is currently under review for publication.}

\begin{abstract}
A common belief is that leveraged ETFs (LETFs) suffer long-term performance decay due to \emph{volatility drag}. We show that this view is incomplete: LETF performance depends fundamentally on return autocorrelation and return dynamics. In markets with independent returns, LETFs exhibit positive expected compounding effects on their target multiples. In serially correlated markets, trends enhance returns, while mean reversion induces underperformance. With a unified framework incorporating AR(1) and AR-GARCH models, continuous-time regime switching, and flexible rebalancing frequencies, we demonstrate that return dynamics---including return autocorrelation, volatility clustering, and regime persistence---determine whether LETFs outperform or underperform their targets. Empirically, using about 20 years of SPDR S\&P~500 ETF and Nasdaq-100 ETF data, we confirm these theoretical predictions. Daily-rebalanced LETFs enhance returns in momentum-driven markets, whereas infrequent rebalancing mitigates losses in mean-reverting~regimes.
\end{abstract}

\section{Introduction}\label{section: Introduction}
Exchange-traded funds (ETFs) have transformed the investment landscape, offering investors cost-effective, liquid, and diversified exposure to various asset classes, e.g., see \cite{malkiel2001growth, ben2017exchange}.  As of December 2023, global ETF assets under management surpassed \$11.5 trillion, driven by an annual compound growth rate of 18.9\% over the preceding five years.\footnote{According to the survey published by PwC (ETFs 2028: Shaping the future), assets under management of global ETFs experienced a compound annual growth rate (CAGR) of 18.9\% over the last five years. Moreover, they increased by over 25\% since December~2022, reaching a new milestone of nearly \$11.5 trillion by the end of 2023.} See also

Leveraged ETFs (LETFs), specialized products within this market,  typically implement a \emph{constant daily leverage} strategy, e.g., see \cite{trainor2008leveraged, leung2016leveraged}, ensuring that leverage exposure is reset at the end of each trading session. For example, a 2x LETF seeks to deliver twice the daily return of its benchmark index. In practice, this leverage is typically achieved through derivatives such as total return swaps (TRS) and futures contracts; \cite{leung2016leveraged}.

Despite their rapid growth, LETFs remain controversial due to concerns over long-term performance decay from \emph{volatility drag}, which refers to the phenomenon where the compounded (geometric) return of an LETF is less than its arithmetic average return due to the effects of volatility; e.g., see \cite{avellaneda2010path, trainor2012volatility, Hongtang2013solving}. See also \cite{SEC2009} for an alert announcement from SEC regarding the riskiness of LETFs.
Prior studies have primarily focused on volatility drag as the dominant driver of LETF underperformance, e.g., see \cite{jarrow2010understanding, charupat2011pricing, trainor2017leveraged}, often emphasizing the negative effects of rebalancing on long-term~returns.

However, we argue that this conventional view is incomplete.
The literature has placed less emphasis on aspects of return behavior that fundamentally alter LETF performance dynamics. For example,
\cite{lu2009long} showed empirically that LETFs can meet their leverage targets over horizons shorter than a month but may deviate when held beyond a quarter. \cite{andrewcarver2009leveraged} pointed out that the long-term returns of high-leveraged ETFs may decay toward zero, as indicated by the concept of growth-optimal portfolios. Further studies, such as \cite{guedj2010leveraged}, demonstrated that LETF investors could incur losses even when the underlying index exhibits positive returns, emphasizing the risks of long-term LETF holdings.
Recently, \cite{dai2023leveraged} studied an optimal rebalancing model for LETFs under market closure and frictions.
Yet these analyses fail to fully characterize the mechanism driving LETF performance.

Throughout this paper, we use the term \emph{return dynamics} to refer broadly to the time-series structure of returns---including independent returns, serial dependence (e.g., autocorrelation), volatility clustering (e.g., GARCH effects), and regime-driven variation in drift and volatility. This encompasses both stationary i.i.d. settings and more complex dynamic models such as AR-GARCH and regime-switching processes

We demonstrate that return autocorrelation and return dynamics play a primary role in driving compounding deviations under realistic return dynamics.  
Specifically, in markets with independent returns, we show that LETFs exhibit positive expected compounding effects to their target multiples, contrary to conventional wisdom.  In serially correlated markets, trends enhance returns while mean reversion induces underperformance. These findings challenge the established paradigm that volatility drag alone dictates LETF performance.

Our theoretical framework and empirical validation over approximately 20 years, spanning multiple LETFs and market crises, confirm that LETF performance cannot be fully explained by volatility drag alone. Rather, it is fundamentally influenced by return autocorrelation, volatility clustering, and regime persistence---factors less emphasized in prior research that relies solely on realized historical volatility. This comprehensive analysis resolves the apparent contradictions in existing literature and provides a more nuanced understanding of when LETFs outperform or underperform their targets under various market conditions.

\subsection{Key Contributions and Paper Organization}
The key contributions of this paper are summarized as follows:

\begin{itemize}
	\item Return Dynamics as Primary Driver of LETF Performance: We demonstrate that return autocorrelation and return dynamics, not simply volatility drag, play a primary role in driving compounding deviations under realistic return dynamics. Our analysis provides sufficient conditions for LETF outperformance in trending markets and underperformance in oscillating regimes, resolving contradictions in previous empirical findings.
	
	\item Unified Theoretical Framework: We develop a unified framework incorporating discrete-time and continuous-time regime switching dynamics, allowing precise characterization of LETF behavior across diverse market conditions. 
	
	\item  Empirical Validation: Using approximately 20 years of historical data from S\&P 500 ETF (Ticker: \texttt{SPY}) and Nasdaq-100 ETF (Ticker: \texttt{QQQ}), we demonstrate that the associated LETF performance systematically outperforms in momentum regimes and underperforms during mean-reverting periods.
	
	\item Practical Guidance for Investors: Our findings reveal that daily-rebalanced LETFs are optimal in momentum markets, while weekly/monthly rebalancing reduces losses in mean-reverting conditions. This insight is critical for hedge funds, portfolio managers, and individual traders optimizing leveraged positions.
	
\end{itemize}

The remainder of this paper is organized as follows.  Section~\ref{section: Preliminaries} develops the theoretical framework for analyzing the compounding effect under different return structures. Section~\ref{section: Compounding Effect Analysis} presents both theoretical analysis and simulations to evaluate LETF performance. Subsequently,  in Section~\ref{section: Empirical Studies},  we validate our theoretical findings against historical data from \texttt{SPY} and \texttt{QQQ},  covering multiple market regimes such as the 2008 financial crisis and the COVID-19 pandemic. Each section integrates theoretical insights with practical implications,  and concludes in Section~\ref{section: conclusion}.

\section{Framework for Analyzing LETF Behavior}\label{section: Preliminaries}
This section introduces the concept of \emph{compounding effect}, which plays a central role in understanding LETF performance deviations.

\subsection{Compounding Effect on LETF}\label{variable definition and methodlogy}
Since LETFs rely on daily rebalancing to maintain a \emph{fixed} leverage ratio, they reflect the target leverage only on a daily basis. Over longer holding periods, compounding effects cause the cumulative returns of LETFs to deviate from their expected multiple, a phenomenon extensively documented in prior studies, e.g., see \cite{avellaneda2010path}. This deviation reveals the importance of understanding LETF performance beyond short-term horizons.

\medskip
\begin{example}[Daily Compounding and Effective Leverage] \rm
	Consider an index that rises by 6\% on the first day and falls by 4\% the next day; the two-period cumulative return is 1.76\%. Now, if an investor holds a 2x leveraged ETF based on this index, they will initially gain 12\% in returns on the first day but then experience an 8\% loss on the second. This compounding effect reduces the \emph{effective leverage ratio} to $3.04\% / 1.76\% = 1.73$, significantly below the intended leverage of $2$. 
	This example highlights how LETFs can deviate from their target leverage over multiple periods due to compounding, even when the underlying index experiences modest fluctuations. \hfill \qedsymbol
\end{example}

\medskip
Below, we define the \emph{compounding effect} as the difference between the cumulative return of an LETF and the corresponding multiple of the underlying ETF’s cumulative return. Unlike prior studies that focus solely on negative deviations and label them as {volatility drag}, see \cite{avellaneda2010path,trainor2013forecasting}, we adopt a broader perspective by recognizing that the compounding effect can be either positive or negative, depending on market conditions.

\medskip
\begin{definition}[Compounding Effect] \rm \label{definition: Compounding Effect}
	Let $\beta \in \mathbb{Z} \setminus \{0\}$ denote the leverage ratio, and let $X_{t}^{\rm LETF}$ and~$X_{t}^{\rm ETF}$ represent the returns at time $t$ for the LETF and the corresponding benchmark ETF, respectively. Then, the \emph{compounding effect} over an $n$-period horizon, call it $\mathtt{CE}_n$, is defined as the difference between the cumulative return of the LETF and $\beta$ times the cumulative return of the~ETF:
	\begin{align} \label{eq: volatility equation}
		\mathtt{CE}_n :=	R_{n}^{\rm LETF, \beta} - \beta R_{n}^{\rm ETF},
	\end{align}
	where $R_{n}^{\rm LETF, \beta} =  \prod_{t = 1}^{n}(1 + X_{t}^{\rm LETF}) - 1  $ and $R_{n}^{\rm ETF} = \prod_{t = 1}^{n}( 1 + X_{t}^{\rm ETF}) - 1$ represent the $n$-period cumulative returns of the LETF with leverage ratio $\beta$ and the underlying benchmark ETF,~respectively. 
\end{definition}

\medskip
\begin{remark} \rm
	A positive compounding effect implies that the LETF outperforms its target leverage multiple, while a negative compounding effect implies underperformance. The returns in Definition~\ref{definition: Compounding Effect} can represent daily, weekly, monthly, or other periodic returns, depending on the rebalancing frequency. Common values for the leverage ratio in practice include~$\beta \in \{-2, -1, 2, 3\}$, see also Sections~\ref{section: Compounding Effect Analysis} and \ref{section: Empirical Studies}. When $\beta$ is negative, these LETFs are called \emph{inverse} ETFs. 
\end{remark}

\subsection{Fees and Tracking Error Considerations}  

Real-world LETFs incur annual expense fees,  which can significantly impact performance over time. 
ETF fees are typically much smaller than LETF fees and tracking errors. Although ETF fees exist, they typically contribute only a marginal correction relative to the dominant effects in LETF performance. Henceforth, we take the ETF’s daily return to be  
$
X_t^{\rm ETF} =  X_t.
$
The LETF’s daily return is modified to include \textit{fees} and \textit{tracking errors}:
\[
X_t^{\rm LETF} =  \beta X_t - f + e_t,
\]
where $\beta$ is the leverage ratio, $f$ is the daily cost (incorporating expense ratio and transaction costs, e.g., $ f \approx \tfrac{1}{252}( f_{{\rm expense}} + f_{{\rm transaction}}) $,  and  $e_t$ is a zero-mean tracking error with variance \(\tau^2\).

\medskip
\begin{theorem}[Expected Compounding Effect Approximation] \label{theorem: Compounding Effect Derivation} 
	Let $\beta \in \mathbb{Z} \setminus \{0\}$. Assume that the underlying daily returns $\{X_t\}_{t \geq 1}$ is strictly stationary with mean $\mu = \mathbb{E}[X_t]$ and autocorrelation $\gamma_k = \mathbb{E}[X_t X_{t+k}]$. Let $f$ denote the daily fee, and $\{e_t\}$ be the tracking error and $|X_t|, |f|, |e_t| \leq M$, for some sufficiently small $M \ll 1$ with $\mathbb{E}[e_t] = \mathbb{E}[e_t X_s] = \mathbb{E}[e_t e_s] = 0$ for $t\neq s$. Then, the compounding effect over $n$ days~satisfies
	\[
	\mathbb{E}[ \mathtt{CE}_n] = -n f + \sum_{k=1}^{n-1} (n-k) \left( \beta(\beta - 1) \gamma_k - 2\beta f \mu \right) + \binom{n}{2} f^2 + O(n^3 M^3).
	\]
\end{theorem}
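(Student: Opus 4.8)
The plan is to expand each $n$-period cumulative return through the multilinear identity
\[
\prod_{t=1}^{n}(1+a_t)-1 = \sum_{\emptyset\neq S\subseteq\{1,\dots,n\}}\prod_{t\in S}a_t = \sum_{t=1}^{n}a_t + \sum_{1\le t<s\le n}a_t a_s + \sum_{|S|\ge 3}\prod_{t\in S}a_t ,
\]
used once with $a_t=X_t^{\rm LETF}=\beta X_t-f+e_t$ and once with $a_t=X_t$. Since $|X_t|,|f|,|e_t|\le M$ we have $|X_t^{\rm LETF}|\le(|\beta|+2)M$, so the cubic-and-higher tail $\sum_{|S|\ge3}\prod_{t\in S}a_t$ is bounded in absolute value by $\sum_{j\ge3}\binom{n}{j}\big((|\beta|+2)M\big)^j$; using $\binom nj\le n^j/j!$ and $\sum_{j\ge3}x^j/j!\le \tfrac{x^3}{6}e^{x}$ this is $O(n^3M^3)$ for fixed $\beta$, provided $nM$ stays bounded (the regime in which the expansion is meaningful). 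Hence $R_n^{\rm LETF,\beta}$ and $R_n^{\rm ETF}$, and therefore $\mathtt{CE}_n=R_n^{\rm LETF,\beta}-\beta R_n^{\rm ETF}$, each coincide with their first- plus second-order parts up to $O(n^3M^3)$; crucially these bounds are pathwise, so they survive taking expectations with no moment hypotheses beyond boundedness.

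Next I would substitute $X_t^{\rm LETF}=\beta X_t-f+e_t$ and form the difference. The first-order part gives $\sum_t X_t^{\rm LETF}=\beta\sum_t X_t-nf+\sum_t e_t$, and the $\beta\sum_t X_t$ piece cancels exactly against the first-order part of $\beta R_n^{\rm ETF}$, leaving $-nf+\sum_t e_t$. In the second-order part, expanding $(\beta X_t-f+e_t)(\beta X_s-f+e_s)$ produces nine terms; subtracting the $\beta X_tX_s$ coming from $\beta R_n^{\rm ETF}$ turns the leading $\beta^2 X_tX_s$ into $\beta(\beta-1)X_tX_s$, and the rest are $-\beta f(X_t+X_s)$, $f^2$, and the $e$-linear group $\beta(X_te_s+e_tX_s)-f(e_t+e_s)+e_te_s$. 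Collecting,
\[
\mathtt{CE}_n = -nf + \sum_{t=1}^{n}e_t + \sum_{1\le t<s\le n}\Big(\beta(\beta-1)X_tX_s - \beta f(X_t+X_s) + f^2 + \beta(X_te_s+e_tX_s) - f(e_t+e_s) + e_te_s\Big) + O(n^3M^3).
\]

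Taking expectations kills every term containing an $e$: $\mathbb{E}[e_t]=0$ disposes of $\sum_t e_t$ and $f(e_t+e_s)$, while in the pairwise sum one always has $t<s$, hence $t\neq s$, which is exactly the range in which $\mathbb{E}[e_tX_s]=\mathbb{E}[e_te_s]=0$ is assumed. Strict stationarity gives $\mathbb{E}[X_tX_s]=\gamma_{s-t}$ and $\mathbb{E}[X_t]=\mu$, so
\[
\mathbb{E}[\mathtt{CE}_n] = -nf + \sum_{1\le t<s\le n}\big(\beta(\beta-1)\gamma_{s-t}-2\beta f\mu+f^2\big) + O(n^3M^3).
\]
Reindexing the inner sum by the lag $k=s-t\in\{1,\dots,n-1\}$, there are exactly $n-k$ pairs $(t,s)$ with $1\le t<s\le n$ and $s-t=k$, and $\sum_{k=1}^{n-1}(n-k)=\binom n2$ in total; this converts the $\gamma$-term into $\sum_{k=1}^{n-1}(n-k)\beta(\beta-1)\gamma_k$, the $\mu$-term into $\sum_{k=1}^{n-1}(n-k)(-2\beta f\mu)$, and the $f^2$-term into $\binom n2 f^2$, which is the stated identity.

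The step I expect to be the main obstacle is not the algebra but the uniform control of the discarded higher-order terms: one must check that cutting off the multilinear expansions of both $R_n^{\rm LETF,\beta}$ and $R_n^{\rm ETF}$ at second order, and then subtracting, genuinely costs only $O(n^3M^3)$ rather than something larger lurking in the interaction of the two tails. The resolution is the pathwise bound noted above — $\big|\sum_{|S|\ge3}\prod_{t\in S}a_t\big|\le\sum_{j\ge3}\binom nj(CM)^j=O(n^3M^3)$ with $C=|\beta|+2$ — which applies verbatim to each product and hence to their difference; everything else reduces to the bookkeeping of the nine cross-terms and the elementary identity $\sum_{k=1}^{n-1}(n-k)=\binom n2$.
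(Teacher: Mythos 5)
Your proposal is correct and follows essentially the same route as the paper's own proof: truncate the multilinear expansion of both cumulative returns at second order with an $O(n^3M^3)$ remainder, expand the pairwise cross-terms of $(\beta X_t - f + e_t)(\beta X_s - f + e_s) - \beta X_t X_s$, kill the tracking-error terms by the stated orthogonality assumptions, and reindex the double sum by lag using the count $n-k$ per lag $k$. Your explicit pathwise tail bound (and the observation that it requires $nM$ bounded) is slightly more careful than the paper's asserted $\mathbb{E}[|\mathfrak{R}_n|]\le C' n^3 M^3$, but it is the same argument.
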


\begin{proof}
	See Appendix~\ref{appendix: technical proofs in Preliminaries}.
\end{proof}

\begin{remark} \rm
	$(i)$ Theorem~\ref{theorem: Compounding Effect Derivation} shows that the expected compounding effect depends on autocorrelation $\mathbb{E}[X_t X_s] = \gamma_{|t-s|}$ rather than just volatility. If there are no fees and no tracking error, i.e.,~$f = e_t = 0$, then the first-order term vanishes. We obtain $\mathbb{E}[ \mathtt{CE}_n] \approx \sum_{k=1}^{n-1} (n-k)   \beta(\beta - 1) \gamma_k$,  showing that
	\emph{autocorrelation, not volatility, drives expected compounding}; see Lemmas~\ref{lemma: Positive Expected Compounding Effect} and~\ref{lemma: n-period case for serially correlated AR(1) returns}. Moreover, under the i.i.d. model $\gamma_k=0$ for $k \ge 1$, hence
	$\mathbb{E}[ \mathtt{CE}_n] = -nf$, which is strictly negative only because of costs.
	$(ii)$ If there are nonzero fees \(f > 0\) and nonzero tracking error \(e_t\), then the fees contribute a first-order negative term \(-nf\), and the second-order correction further adjusts the compounding effect. For realistic (small) values of \(f\) and \(\tau\), these corrections will dampen the compounding effect or even reverse its sign if costs are sufficiently~high.
\end{remark}

\section{Compounding Effect Analysis: A Revisit}\label{section: Compounding Effect Analysis}
This section presents a comprehensive analysis of the compounding effect, examining how changes in key parameters---leverage ratio, volatility, and rebalancing frequency---affect LETF performance. By exploring these factors across independent, serially correlated return scenarios, as well as continuous-time regime switching models, we aim to provide actionable insights for managing LETFs under varying market~conditions. To isolate the compounding effect, we henceforth assume there are no fees and no tracking~error.

\subsection{Serially Independent Returns Case: A Discrete-Time Benchmark}\label{math induction}
We first analyze the compounding effect under the assumption of serially independent returns, representing a baseline scenario in an idealized market. 
Specifically, we consider the case where the returns of the underlying ETF, $X_{t}^{\rm ETF}$,  are independent, with a common mean $\mu \in \mathbb{R}$ and variance~$\sigma^{2} > 0$. The returns of the LETF, $X_{t}^{\rm LETF}$, are assumed to perfectly track the underlying ETF with a leverage ratio $\beta \in \mathbb{Z} \setminus \{0\}$ with no fees. 
Notably, due to the limited liability for investors regarding ETFs and LETFs, we have $\beta\mu > -1$ in practice, i.e., the LETF does not collapse. Under these circumstances, the following lemma indicates that the expected compounding effect is always nonnegative, regardless of the leverage ratio or the holding period.

\medskip
\begin{lemma}[Nonnegative Expected Compounding Effect] \label{lemma: Positive Expected Compounding Effect}
	Let $\{X_t\}_{t\geq 1}$ be a sequence of independent returns with mean $\mathbb{E}[X_t] = \mu$ and let $\beta \in \mathbb{Z} \setminus \{0, 1\}$ be the leverage ratio with $\beta \mu > -1$. Then, the expected compounding effect over an $n$-period horizon satisfies
	\begin{align} \label{eq: Positive Expected Compounding Effect}
		\mathbb{E}\left[ \mathtt{CE}_n \right] 
		& = [ (1 + \beta \mu)^{n}-1] -\beta[(1 + \mu)^{n}-1] \geq 0\;  \text{ for all } n \geq 1.
	\end{align}
	Moreover, if $\mu = 0$, then $\mathbb{E}\left[ \mathtt{CE}_n \right]  = 0$ for all $n \geq 1$. 
\end{lemma}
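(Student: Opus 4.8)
The plan is to first reduce $\mathbb{E}[\mathtt{CE}_n]$ to the stated closed form, and then establish nonnegativity by induction on the horizon $n$, which fits the theme of this subsection. In the present no-fee, perfect-tracking regime we have $X_t^{\rm ETF}=X_t$ and $X_t^{\rm LETF}=\beta X_t$, so $R_n^{\rm LETF,\beta}=\prod_{t=1}^n(1+\beta X_t)-1$ and $R_n^{\rm ETF}=\prod_{t=1}^n(1+X_t)-1$. Taking expectations and using independence of $\{X_t\}$ to factor each product gives $\mathbb{E}[R_n^{\rm LETF,\beta}]=(1+\beta\mu)^n-1$ and $\mathbb{E}[R_n^{\rm ETF}]=(1+\mu)^n-1$, hence the identity $\mathbb{E}[\mathtt{CE}_n]=[(1+\beta\mu)^n-1]-\beta[(1+\mu)^n-1]$. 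Substituting $\mu=0$ into this expression yields $\mathbb{E}[\mathtt{CE}_n]=0$ immediately, which settles the final claim.

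Write $g_n$ for the right-hand side of this identity. The base case $n=1$ is trivial: $g_1=\beta\mu-\beta\mu=0$. For the inductive step I would introduce $a:=1+\beta\mu$ and $b:=1+\mu$, both strictly positive --- the former by the standing hypothesis $\beta\mu>-1$, the latter by the analogous limited-liability constraint $\mu>-1$ for the underlying ETF, which I would flag explicitly since the statement lists only $\beta\mu>-1$. A one-line manipulation of $a^{n+1}$ and $b^{n+1}$ produces the recursion $g_{n+1}=g_n+\beta\mu\,(a^n-b^n)$, so given $g_n\ge 0$ it remains to show $\beta\mu(a^n-b^n)\ge 0$.

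This is the crux of the argument. Since $a,b>0$ and $n\ge 1$, I would factor $a^n-b^n=c_n(a-b)$ with $c_n=\sum_{j=0}^{n-1}a^{\,n-1-j}b^{\,j}>0$, together with $a-b=(\beta-1)\mu$. Then $\beta\mu(a^n-b^n)=c_n\,\beta(\beta-1)\mu^2$, which is nonnegative because $c_n>0$, $\mu^2\ge 0$, and $\beta(\beta-1)$ --- a product of two consecutive integers --- is $\ge 0$ for every integer $\beta$ (indeed $\ge 2$ when $\beta\in\mathbb{Z}\setminus\{0,1\}$). Combined with the inductive hypothesis this gives $g_{n+1}\ge 0$, closing the induction.

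The main obstacle, such as it is, lies precisely in this last step: a naive term-by-term attack on $g_n$ via the binomial theorem fails, since the summands $\binom{n}{k}\mu^k(\beta^k-\beta)$ change sign when $\mu<0$; the right move is to factor $a^n-b^n$ so that the \emph{integrality} of $\beta$ enters cleanly through $\beta(\beta-1)\ge 0$. As an alternative to induction I would record the convexity route: $x\mapsto(1+\mu x)^n$ is convex on the region $1+\mu x>0$, and writing $1$ as a convex combination of $0$ and $\beta$ when $\beta\ge 2$ (respectively $0$ as a convex combination of $\beta$ and $1$ when $\beta\le-1$) yields $\phi(\beta)\ge\beta\phi(1)$ for $\phi(x):=(1+\mu x)^n-1$, which is exactly $\mathbb{E}[\mathtt{CE}_n]\ge 0$. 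I would nonetheless keep the induction as the primary argument, since it dovetails with the section's theme and sidesteps having to verify that the relevant interval lies inside the domain where convexity holds.
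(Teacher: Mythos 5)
Your proposal is correct, and while it shares the paper's overall skeleton — reduce to the closed form via independence, then induct on $n$ with the trivial base case $g_1=0$ — the inductive step is executed quite differently and, frankly, more cleanly. The paper multiplies the inductive hypothesis through by $(1+\beta\mu)$ and arrives at the residual term $\beta\mu(\beta-1)[(1+\mu)^k-1]$, whose nonnegativity it then checks by a four-way case analysis on the signs of $\beta$ and $\mu$. You instead derive the \emph{exact} first-difference recursion $g_{n+1}=g_n+\beta\mu\,(a^n-b^n)$ with $a=1+\beta\mu$, $b=1+\mu$, and factor the increment as $c_n\,\beta(\beta-1)\mu^2$ with $c_n=\sum_{j=0}^{n-1}a^{n-1-j}b^j>0$; nonnegativity is then manifest from $\mu^2\ge 0$ and $\beta(\beta-1)\ge 0$, with no case splitting. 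This buys transparency about exactly where integrality of $\beta$ enters (only through $\beta(\beta-1)\ge 0$, so in fact any real $\beta\notin(0,1)$ would do) and about the role of the hidden hypothesis $\mu>-1$: you need it to guarantee $b>0$ and hence $c_n>0$, and you are right to flag it — the paper's own case analysis uses it silently (its claim that $(1+\mu)^k-1<0$ when $\mu<0$ fails if $\mu\le -1$), so this is a genuine omission in the lemma's hypotheses rather than an artifact of your route. Your convexity remark is also sound and mirrors the Jensen-type argument the paper uses for its continuous-time Corollary~\ref{corollary: Nonnegative CE in Single Regime}; either argument closes the proof.
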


\begin{proof}
	See Appendix~\ref{appendix: technical proofs}.
\end{proof}

\begin{remark}\rm
	Lemma~\ref{lemma: Positive Expected Compounding Effect} implies that for independent returns, LETFs tend to outperform expectations regardless of the leverage ratio and holding period.
	This is also evident from the subsequent simulation results in Section~\ref{section: Empirical Studies}.
	If leverage is not allowed, i.e., $\beta = 1$, then $ \mathtt{CE}_n  = 0$ with probability one. Lastly, while the lemma proves this result for integer values of the leverage ratio $\beta$, in theory, it can take any real value; however,  practical trading restricts $\beta$ to~integers.
\end{remark}

\subsubsection{Monte Carlo Simulations with Independent Return Model.} \label{subsubsection: Monte Carlo Simulations with Independent Return Model}
To validate our result, we use daily adjusted closing price data for \texttt{SPY}, an ETF that tracks the S\&P 500 index, spanning from January 1993 to December 2023 with about a total of 7,800 trading days.  \texttt{SPY} is chosen for its broad market representation and high liquidity, making it an ideal candidate for evaluating LETF performance.  Using this data, we estimate the daily returns and their standard deviation, which is approximately 1\%.

\medskip
\emph{Compounding Effect with Various Leverage Ratios.}
To understand how leverage ratios $\beta$ affect the expected compounding effect $\mathbb{E}[\mathtt{CE}_n]$, we assume that the daily returns are normally distributed, with an annualized mean ranging from -20\% to 20\%. 
For each simulation, we independently draw daily return samples from this distribution to generate 100,000 return paths over one year across various combinations.\footnote{We consider one unit of leveraged ETF paired with $\beta$ (its corresponding leverage ratio) unit of ETF as a combination or a portfolio. Here, we consider $\beta \in \{-2, -1, 2, 3\}$, which is commonly found in the market.}

The theoretical and simulated results, shown in Figure~\ref{Fig: simulated and theoretical compounding effect with different leverage multiplier}, demonstrate that LETFs generally have a positive expected compounding effect across various leverage levels, indicating that LETFs generally outperform their underlying target on average. 
From the figure, we also see that the compounding effect diminishes as expected returns $\mu$ approach zero, and it becomes more pronounced when it deviates significantly from zero. These findings are consistent with~Lemma~\ref{lemma: Positive Expected Compounding Effect}, which predicts a positive expected compounding effect for independent~returns.
Moreover, consistent with existing literature, e.g., \cite{trainor2013forecasting, abdou2017accounting}, we see that LETFs with higher leverage ratios exhibit a stronger compounding effect.

\begin{figure}[htbp]
	\centering
	\includegraphics[width = 0.6\textwidth]{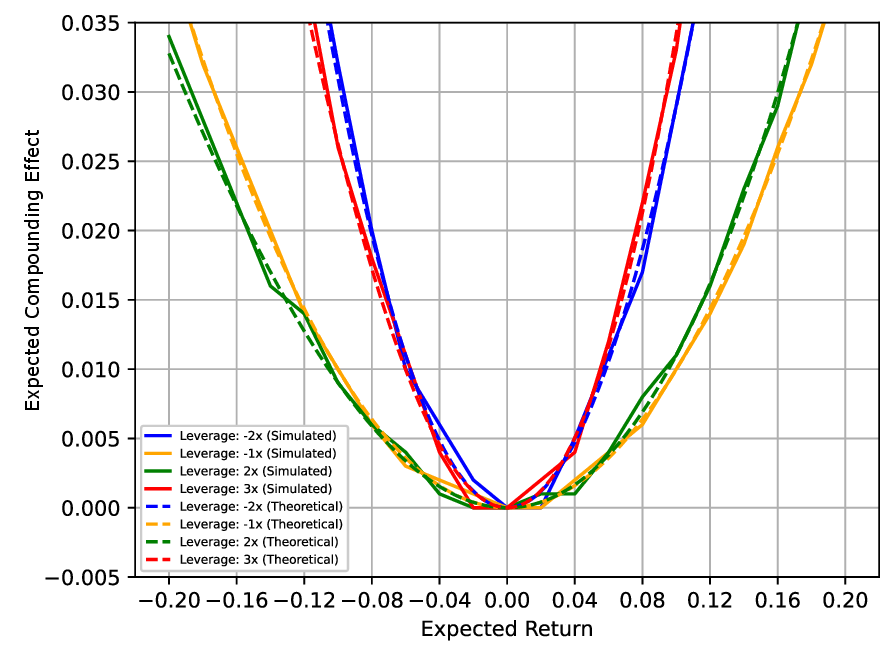}
	\caption{Simulated and Theoretical Compounding Effect $\mathbb{E}[\mathtt{CE}_n]$ with Different Leverage Ratios (Independent Return Case) where Theoretical $\mathbb{E}[\mathtt{CE}_n]$ is computed by Lemma~\ref{lemma: Positive Expected Compounding Effect} and Simulated $\mathbb{E}[\mathtt{CE}_n]$ is obtained via Monte Carlo simulations. A positive effect is observed across different values of $\beta$ and $\mu$.}
	\label{Fig: simulated and theoretical compounding effect with different leverage multiplier}
\end{figure}

\medskip
\emph{Compounding Effect with Different Levels of Volatility.}
After examining the impact of different leverage levels $\beta$ on the compounding effect, we investigate how varying volatility levels $\sigma$ influence the expected compounding effect $\mathbb{E}[\mathtt{CE}_n]$. Specifically, we simulate scenarios with daily volatility levels of 0.5\%, 1\%, and 1.5\%, where the~1\% level corresponds to the volatility estimate derived from \texttt{SPY} return data mentioned in Section~\ref{subsubsection: Monte Carlo Simulations with Independent Return Model}.   The other two levels represent adjustments of~$0.5$ standard deviations above and below this baseline. 

The results, shown in Figure~\ref{fig: compounding effect with different volatility}, indicate that when volatility changes, the expected compounding effect remains relatively stable. This suggests that, for ETFs with similar expected returns, the expected compounding effect exhibits consistent behavior despite variations in volatility.\footnote{Here, we only present the results when the leverage ratio $\beta = 2$ since similar patterns are shown across different leverage combinations, presenting a similar shape of their respective curves.} However, as we shall see later in the case of serially correlated returns,  the volatility level would impact the compounding effect.

\begin{figure}[htbp]
	\centering
	\includegraphics[width=0.65\textwidth]{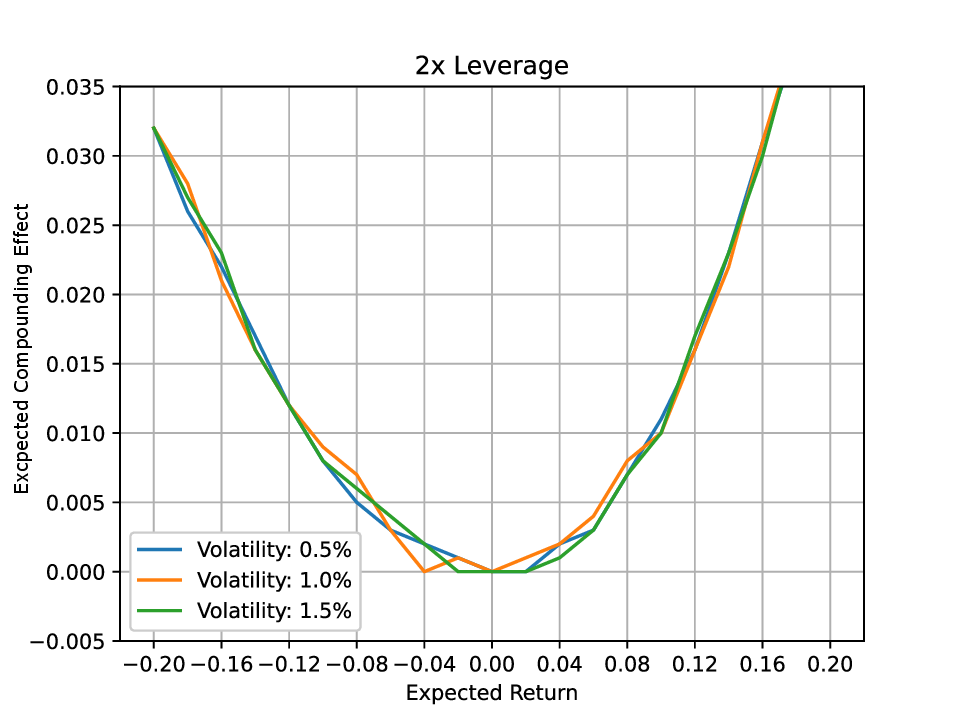}
	\caption{Compounding Effect with Different Volatility (Independent Return Case).}
	\label{fig: compounding effect with different volatility}
\end{figure}

\begin{figure}[htbp]
	\centering
	\includegraphics[width=0.65\textwidth]{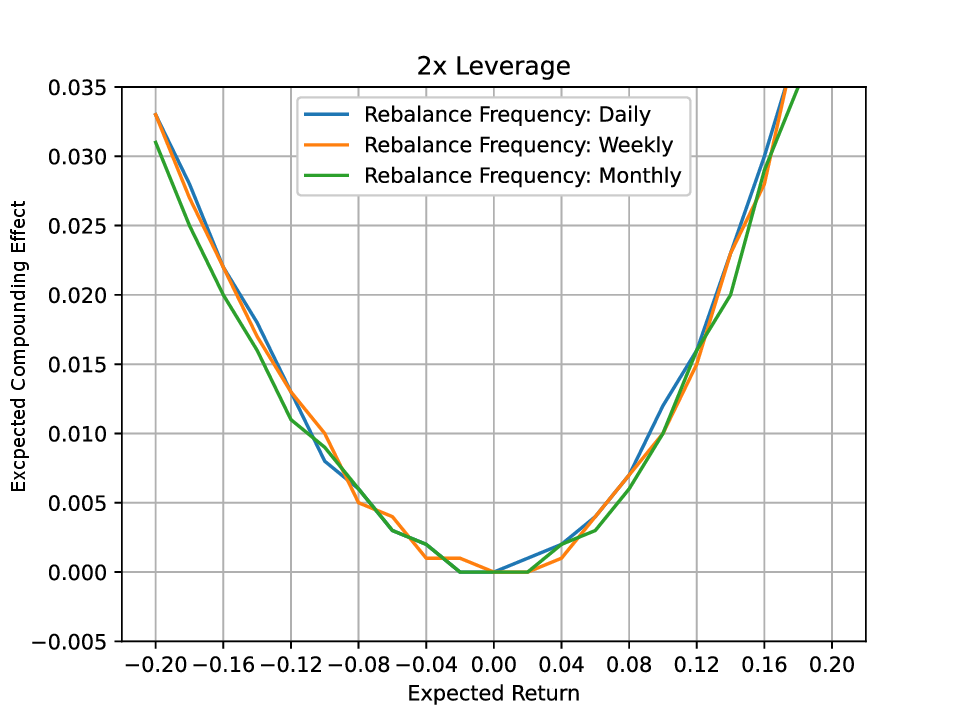}
	\caption{Compounding Effect with Different Rebalancing Frequencies (Independent Return Case).}
	\label{fig: compounding effect with different rebalancing frequency}
\end{figure}

\medskip
\emph{Rebalancing Frequency Effects.}
Next, we analyze how the rebalancing frequency of LETFs influences the expected compounding effect by conducting simulations with daily, weekly, and monthly rebalancing intervals. Using a daily volatility level of 1\%, we compare the results across different rebalancing frequencies. The results in Figure~\ref{fig: compounding effect with different rebalancing frequency} demonstrate that changing the rebalancing frequency does not affect the compounding effect when returns are independent. This result suggests that, in the absence of serial correlation, rebalancing frequency has minimal impact on LETF performance. Consequently, under such conditions, investors may choose less frequent rebalancing strategies to reduce transaction costs without sacrificing performance.

\subsubsection{Portfolio Construction Implications: Independent Return Model.} \label{subsubsection: Portfolio Construction Implications}

Under the frictionless independent returns benchmark, we can construct arbitrage-like portfolios by combining long positions in LETFs with short positions in the corresponding ETFs. For leverage $\beta > 1$, a portfolio with one unit of LETF and $\beta$ units of short position in the underlying ETF captures the positive expected compounding effect $\mathbb{E}[\mathtt{CE}_n]$. For $\beta < 0$, we substitute short ETF positions with long positions in inverse ETFs.
The return of such a portfolio is 
$$
R_{\rm portfolio} := R_n^{\rm LETF} - \beta R_n^{\rm ETF} = \mathtt{CE}_n
$$
and the expected return is simply $\mathbb{E}[R_{\rm portfolio}] = \mathbb{E}[\mathtt{CE}_n] > 0$ under independence and $\mu \neq 0$, as established in Lemma~\ref{lemma: Positive Expected Compounding Effect}.
This theoretical property offers potential arbitrage opportunities in markets where returns approximate independence. However, as shown in subsequent sections, return autocorrelation significantly alters compounding behavior, and thus the profitability of such strategies is highly regime-dependent. See also~\cite{rappoport2018arbitrage} for a empirical evidence that ETF arbitrage may be less effective under illiquid condition.

\subsection{Serially Correlated Returns}\label{Analysis of Serially Correlated Returns}
We now extend our analysis to the case of serially correlated returns, where market dynamics exhibit short-term trends or reversals. To capture the temporal dependence, we begin with an autoregressive process of order 1 (AR(1)) model.

\subsubsection{Returns with AR(1) Model.}
Specifically, we model the returns $X_t$ using an AR(1) process: $X_t = c + \phi X_{t-1} + \varepsilon_t $
with zero intercept $c=0$, mean-zero innovation $\mathbb{E}[\varepsilon_t] = 0$, and constant variance ${\rm var}(\varepsilon_t) = \sigma^2>0$, and $\varepsilon_t$ are independent of $X_t$. Moreover, we assume $\mathbb{E}[X_t] = 0$ and~$\mathbb{E}[X_t X_{t-1}] = \phi \frac{\sigma^2}{1-\phi^2}$, where $|\phi| <1$. 
This setup reflects scenarios where returns exhibit serial correlation, either positive (momentum) or negative (mean reversion), which can significantly alter the compounding effect compared to the independent return case. The following lemma indicates that the expected compounding effect depends on the sign of the autocorrelation coefficient $\phi$.

\medskip
\begin{lemma}[Compounding Effect for AR(1) Return Model] \label{lemma: n-period case for serially correlated AR(1) returns}  
	Let $\beta \in \mathbb{Z}\setminus\{0, 1\}$ be the leverage ratio, and let $\{X_t\}_{t\geq 1}$ follow an AR(1) model. Suppose there exists a constant $m \in (0, 1)$ such that $\mathbb{P}(|X_t| < m ) = 1$ for all $t =1 \dots, n$.
	Then, the following statements hold:
	\begin{itemize}
		\item[(i)] If $\phi>0$, then $\mathbb{E}[ \mathtt{CE}_n] > 0$.
		\item[(ii)] If $\phi < 0$, then $\mathbb{E}[ \mathtt{CE}_n] < 0$.
	\end{itemize}
\end{lemma}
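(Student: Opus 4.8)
The idea is to reduce $\mathbb{E}[\mathtt{CE}_n]$ to a scalar whose sign is visibly that of $\phi$, using that the stationary AR(1) autocovariances are $\gamma_k = \mathbb{E}[X_t X_{t+k}] = \gamma_0 \phi^k$ with $\gamma_0 = \sigma^2/(1-\phi^2) > 0$, and that $\beta(\beta-1)\ge 2 > 0$ for every admissible integer $\beta$. A quick route specializes Theorem~\ref{theorem: Compounding Effect Derivation}: with $f = e_t = 0$ its expansion collapses to $\mathbb{E}[\mathtt{CE}_n] = \beta(\beta-1)\sum_{k=1}^{n-1}(n-k)\gamma_k + O(n^3 m^3)$, and substituting $\gamma_k = \gamma_0\phi^k$ leaves a leading term proportional to $S_n(\phi) := \sum_{k=1}^{n-1}(n-k)\phi^k$. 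I would prefer a fully exact backbone, however, starting from the one-step identity
\[
\mathtt{CE}_n = \mathtt{CE}_{n-1} + \beta X_n\bigl(P_{n-1} - Q_{n-1}\bigr), \qquad P_j := \prod_{t=1}^j(1+\beta X_t),\quad Q_j := \prod_{t=1}^j(1+X_t),
\]
which follows by expanding $P_n = P_{n-1}(1+\beta X_n)$ and $Q_n = Q_{n-1}(1+X_n)$ in $\mathtt{CE}_n = P_n - \beta Q_n + \beta - 1$. Since $\mathtt{CE}_1 = 0$, iterating gives $\mathtt{CE}_n = \beta\sum_{k=2}^{n} X_k\bigl(P_{k-1}-Q_{k-1}\bigr)$.

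Taking expectations is the next step. As $P_{k-1}-Q_{k-1}$ depends only on $X_1,\dots,X_{k-1}$ and $\mathbb{E}[X_k\mid X_1,\dots,X_{k-1}] = \phi X_{k-1}$ (because $c=0$, $\mathbb{E}[\varepsilon_k]=0$, and $\varepsilon_k$ is independent of the past), conditioning yields
\[
\mathbb{E}[\mathtt{CE}_n] = \beta\phi\sum_{j=1}^{n-1}\mathbb{E}\bigl[X_j\,(P_j - Q_j)\bigr].
\]
I would then telescope the product difference, $P_j - Q_j = (\beta-1)\sum_{s=1}^j X_s\prod_{t<s}(1+\beta X_t)\prod_{s<t\le j}(1+X_t)$, so that $\mathbb{E}[X_j(P_j-Q_j)] = (\beta-1)\sum_{s=1}^j \mathbb{E}\bigl[X_jX_s\prod_{t<s}(1+\beta X_t)\prod_{s<t\le j}(1+X_t)\bigr]$. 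If every interior product factor is replaced by $1$, the inner sum becomes $\sum_{s=1}^j \mathbb{E}[X_jX_s] = \gamma_0\sum_{i=0}^{j-1}\phi^i = \gamma_0\frac{1-\phi^{j}}{1-\phi}$, which is strictly positive for either sign of $\phi$ since $|\phi|<1$. Hence each $\mathbb{E}[X_j(P_j-Q_j)]$ carries the sign of $\beta-1$, which coincides with the sign of $\beta$ for $\beta\in\mathbb{Z}\setminus\{0,1\}$; summing and multiplying by $\beta\phi$ gives $\mathbb{E}[\mathtt{CE}_n] = \beta(\beta-1)\phi\cdot(\text{a strictly positive quantity})$, so $\operatorname{sgn}\mathbb{E}[\mathtt{CE}_n] = \operatorname{sgn}\phi$, which is exactly (i) and (ii). (Along the first route one instead needs the elementary fact $\operatorname{sgn}S_n(\phi) = \operatorname{sgn}\phi$: immediate for $\phi>0$, and for $\phi<0$ obtained by pairing consecutive summands $(k,k+1)$ with $k$ odd, each pair equalling $-|\phi|^k\bigl[(n-k)-(n-k-1)|\phi|\bigr] < 0$.)

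The main obstacle is making rigorous the step ``replace every interior product factor by $1$'': one must show the discarded remainder cannot flip the sign. Boundedness enters precisely here, since $\bigl|\mathbb{E}[X_j(P_j-Q_j)] - (\beta-1)\gamma_0\tfrac{1-\phi^{j}}{1-\phi}\bigr|$ can only be bounded by a quantity of order $m\,(1+|\beta|m)^{j}$, which stays below the positive leading term $(\beta-1)\gamma_0\tfrac{1-\phi^{j}}{1-\phi}$ (and, summed, below $\beta(\beta-1)\gamma_0 S_n(\phi)$) only when $m$ is small relative to $n$ and $|\beta|$; equivalently, in the Theorem~\ref{theorem: Compounding Effect Derivation} route, one must argue that $O(n^3m^3)$ is dominated by the $\Theta(n)$ leading term. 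This needs care because for non-Gaussian innovations the neglected pieces genuinely contain nonzero odd joint moments whose signs need not cooperate with $\phi$, so it is a domination estimate rather than a cancellation that delivers the conclusion — and accordingly the statement should be understood for $m$ sufficiently small, consistent with the $M\ll 1$ hypothesis underlying Theorem~\ref{theorem: Compounding Effect Derivation}.
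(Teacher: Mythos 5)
Your proposal is correct, and its first (``quick'') route is essentially the paper's own argument: the paper expands $\prod_{t}(1+\beta X_t)-\beta\prod_{t}(1+X_t)$, retains the second-order term $\beta(\beta-1)\sum_{i<j}\mathbb{E}[X_iX_j]=\beta(\beta-1)\tfrac{\sigma^2}{1-\phi^2}\,\phi\bigl[(n-1)+(n-2)\phi+\cdots+\phi^{n-2}\bigr]$, and proves the bracketed polynomial is positive for $|\phi|<1$ via the identity $\bigl[(n-1)+(n-2)\phi+\cdots+\phi^{n-2}\bigr]=\tfrac{\sum_{j=0}^{n-1}(1-\phi^{j})}{1-\phi}$; your pairing argument for $\operatorname{sgn}S_n(\phi)=\operatorname{sgn}\phi$ is an equivalent substitute, since $S_n(\phi)$ is exactly $\phi$ times that polynomial and your $\gamma_0\sum_{j=1}^{n-1}\tfrac{1-\phi^{j}}{1-\phi}$ reproduces it. Your second, preferred route is a genuinely different decomposition: the exact recursion $\mathtt{CE}_n=\mathtt{CE}_{n-1}+\beta X_n(P_{n-1}-Q_{n-1})$ with $\mathtt{CE}_1=0$, followed by conditioning via $\mathbb{E}[X_k\mid X_1,\dots,X_{k-1}]=\phi X_{k-1}$, extracts the factor $\phi$ \emph{exactly} from the conditional mean rather than from a truncated autocovariance expansion, and the telescoped product difference cleanly produces the $\beta(\beta-1)$ sign factor; this is arguably more transparent and would generalize more easily to other conditional-mean structures. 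Both routes, however, end at the same domination step---discarding the terms of order three and higher in the $X_t$'s---which the paper itself handles only informally (``the first quantity dominates \dots for sufficiently small returns''); your explicit caveat that this requires $m$ small relative to $n$ and $|\beta|$, and that for non-Gaussian innovations the neglected odd joint moments need not share the sign of $\phi$, is if anything more candid than the paper's treatment. In short, there is no gap in your argument beyond the one already present in the paper's own proof.
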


\begin{proof}
	See Appendix~\ref{appendix: compounding effect for serially correlated returns}.
\end{proof}

\begin{remark} \rm
	Lemma~\ref{lemma: n-period case for serially correlated AR(1) returns} establishes that short-term serial dependence---specifically, positive or negative autocorrelation---significantly influences LETF performance. When returns exhibit positive autocorrelation, positive correlation amplifies returns aligned with the prevailing trend, thereby amplifying gains. In contrast, negative autocorrelation induces a mean-reverting pattern that misaligns rebalancing with price movements, leading to a performance drag due to buying high and selling low.
	
	While the AR(1) model suggests that $\phi > 0$ leads to LETF outperformance and $\phi < 0$ leads to underperformance, this relationship weakens in the presence of volatility clustering. As we will see in Section~\ref{subsubsection: Returns with AR-GARCH Models}, the AR(1)-GARCH(1,1) model reveals that volatility persistence dominates, reducing the predictive power of $\phi$ in real-world markets.
\end{remark}

\subsubsection{Monte Carlo Simulations with AR(1) Model.} \label{subsection: Monte Carlo Simulations with AR(1) Model}
In this section, we simulate how LETF performance varies with different values of AR coefficient $\phi$, using Monte Carlo simulations. Specifically, we generate $10,000$ sample paths of daily returns over a one-year horizon and examine how the expected compounding effect $\mathbb{E}[\mathtt{CE}_n]$ varies across a range of AR coefficients with $ \phi \in [-0.9, 0.9]$.

\medskip
\emph{Compounding Effect with Various Leverage Ratios.}
Figure~\ref{fig: simulated and theoretical compounding effect with different leverage in AR(1) returns} demonstrates the impact of different AR coefficients $\phi$ on the expected compounding effect $\mathbb{E}[ \mathtt{CE}_n]$ across various combinations of LETFs. Unlike the independent return case, where the expected compounding effect is positive, the AR(1) structure induces sign dependence: 
For $\phi > 0$ (momentum-driven markets), the expected compounding effect is positive,  enhancing LETF performance. Conversely, the expected compounding effect becomes negative for $\phi < 0$ (mean-reverting markets), indicating performance drag due to return reversals. 
This observation highlights the critical role of return dynamics in determining LETF performance and the necessity of accounting for temporal correlation when evaluating LETF~strategies.

\begin{figure}[htbp]
	\centering
	\includegraphics[width=0.55\textwidth]{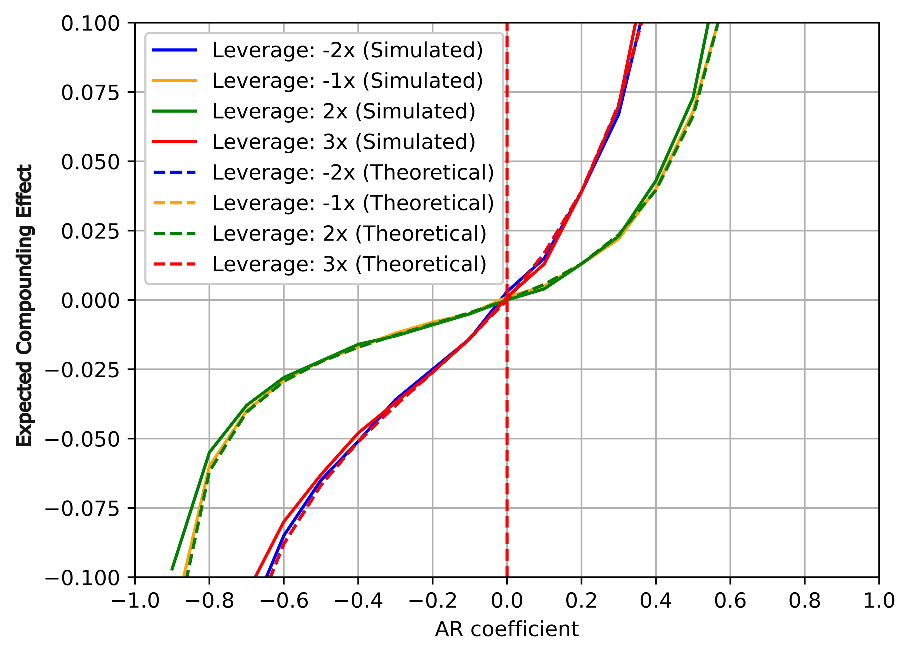}
	\caption{Simulated and Theoretical Compounding Effect with Different Leverage Ratio $\beta$. (AR(1) Returns  where theoretical $\mathbb{E}[\mathtt{CE}_n]$ is computed by Lemma~\ref{lemma: Positive Expected Compounding Effect} and simulated $\mathbb{E}[\mathtt{CE}_n]$ is obtained via Monte Carlo simulations with $10,000$ sample paths for each $\phi$.)}
	\label{fig: simulated and theoretical compounding effect with different leverage in AR(1) returns}
\end{figure}

\begin{figure}[htbp]
	\centering
	\includegraphics[width=0.85\textwidth]{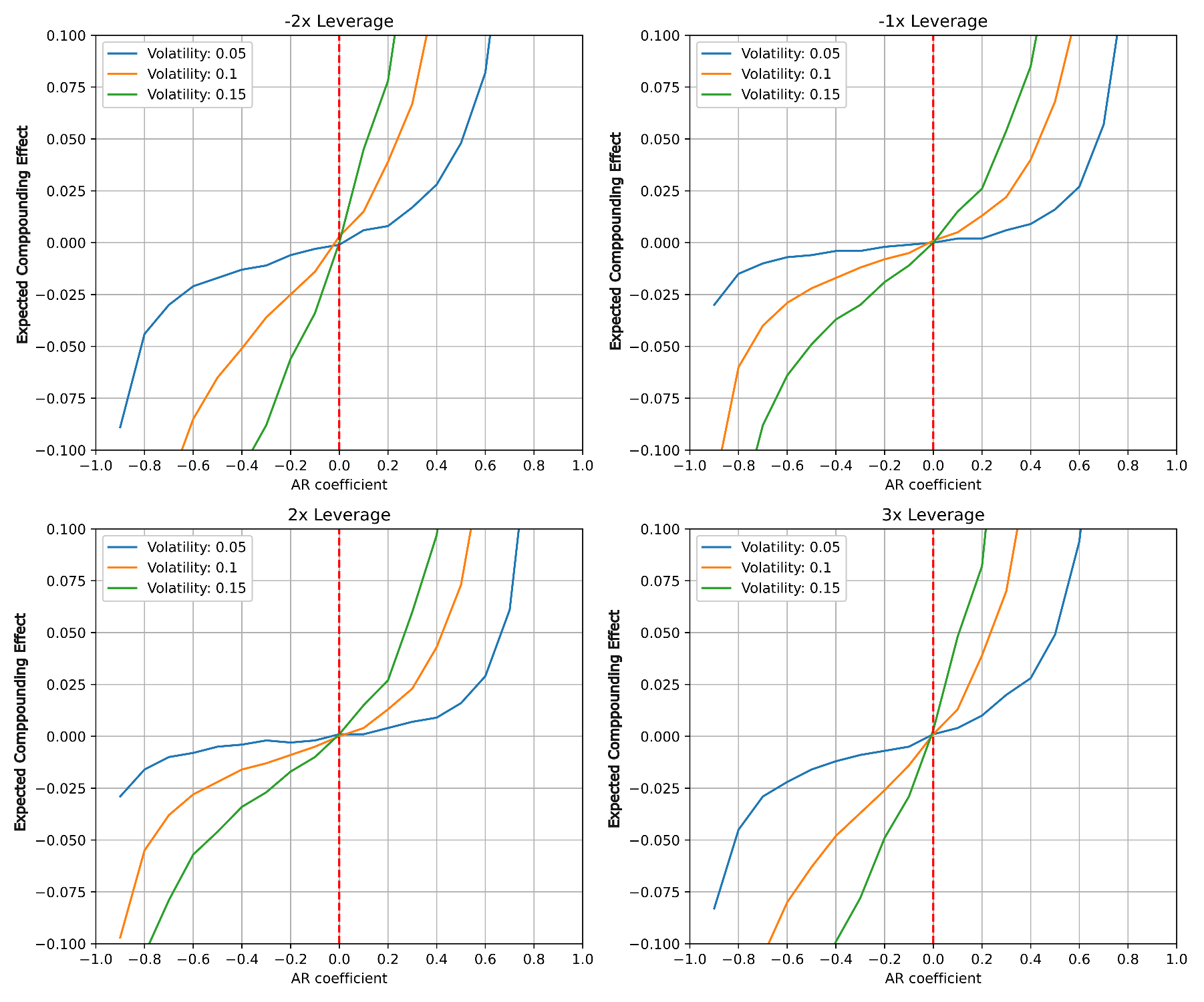}
	\caption{Expected Compounding Effect with Different Volatility (AR(1) Returns).}
	\label{fig: compounding effect with different volatility in AR(1) returns}
\end{figure}

\medskip
\emph{Compounding Effect with Different Levels of Volatility.}
We examine how volatility modulates the expected compounding effect in the presence of serial correlation among returns. As shown in Figure~\ref{fig: compounding effect with different volatility in AR(1) returns}, increased volatility amplifies the expected compounding effect when the AR coefficient~$\phi >0$, reflecting enhanced return potential under momentum-driven market conditions. In such scenarios, LETFs benefit from larger price movements, as rebalancing adjustments amplify returns in the direction of the trend. However, when the positive correlation weakens, increased volatility introduces greater randomness, reducing the effectiveness of rebalancing.

Conversely, when $\phi < 0$, corresponding to mean-reverting markets, higher volatility exacerbates return fluctuations and induces performance drag. Rebalancing in this regime amplifies exposure during losses and suppresses it during recoveries, generating a pronounced negative compounding effect. These findings highlight the dual role of volatility: while it can enhance LETF performance in trending markets, it poses significant risks in mean-reverting environments.

\medskip
\emph{Compounding Effect with Varying Rebalancing Frequencies.}
We examine how rebalancing frequency impacts the compounding effect under AR(1) return model. We simulate daily, weekly, and monthly rebalancing intervals. The results, presented in Figure~\ref{fig: compounding effect with different rebalancing frequency in AR(1) model}, indicate that when the AR coefficient $\phi$ is positive (indicating momentum-driven markets), portfolios with more frequent rebalancing exhibit a stronger positive compounding effect. This outcome shows the importance of timely rebalancing in capturing upward trends, as daily rebalancing allows LETFs to efficiently capitalize on market~movements.

In contrast, when the AR coefficient is negative (mean-reverting markets), frequent rebalancing becomes detrimental, leading to a cycle of buying high and selling low. Interestingly, extending the rebalancing interval to a week reduces the negative compounding effect, while monthly rebalancing nearly eliminates it. These results suggest that in oscillating markets, less frequent rebalancing can mitigate performance drag, providing a strategic advantage to investors who choose LETFs with longer rebalancing intervals.

\begin{figure}[htbp]
	\centering
	\includegraphics[width=0.9\textwidth]{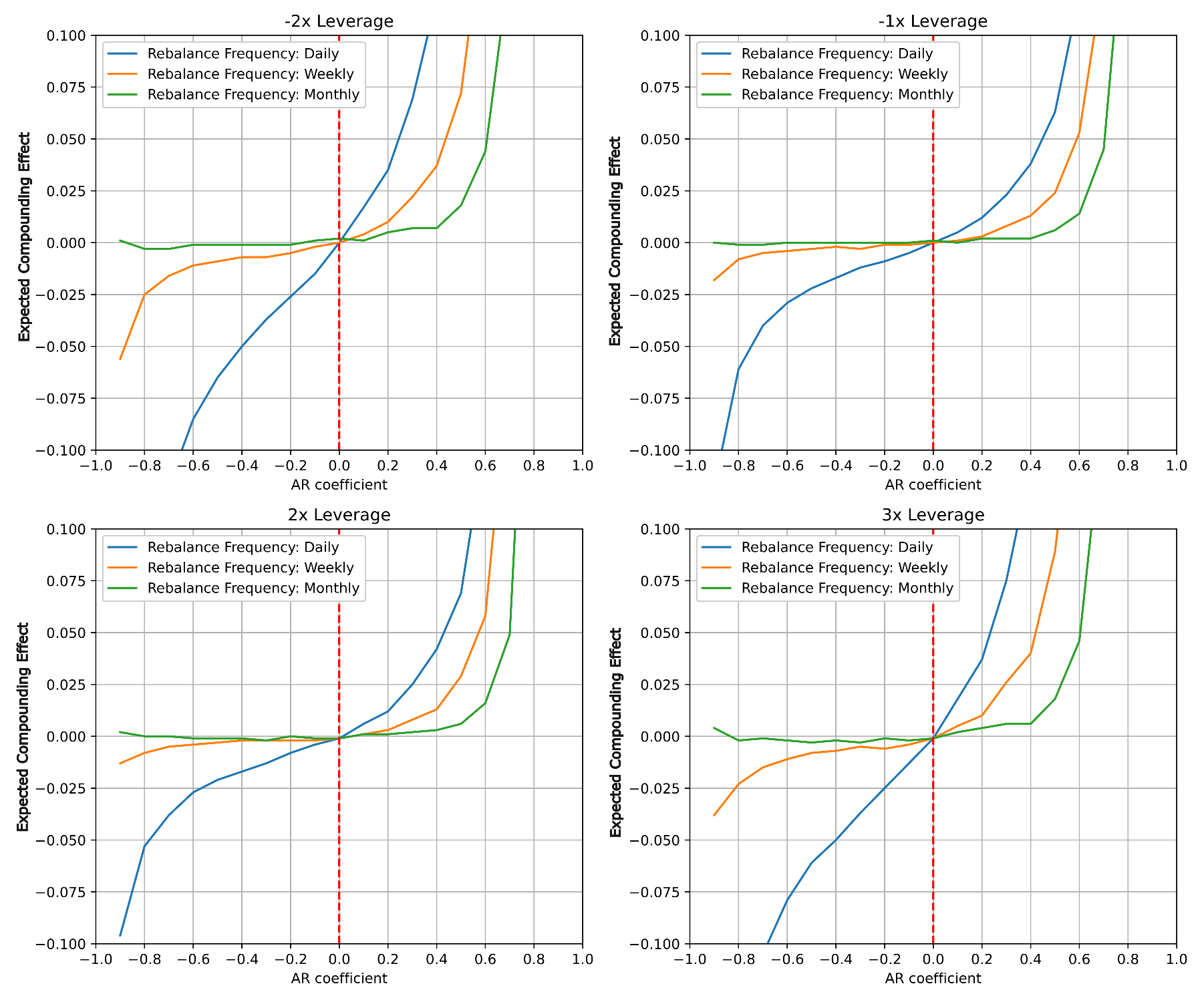}
	\caption{Expected Compounding Effect with Different Rebalancing Frequencies (AR(1) Returns.)}
	\label{fig: compounding effect with different rebalancing frequency in AR(1) model}
\end{figure}

\subsubsection{Portfolio Construction Implications: AR(1) Model.}
Under AR(1) dynamics, arbitrage-like constructions, as seen in Section~\ref{subsubsection: Portfolio Construction Implications}, analogous to the independent case may persist when the AR coefficient 
$\phi >0$ (momentum regime), as established in Lemma~\ref{lemma: n-period case for serially correlated AR(1) returns}. However, when  $\phi <0$ such portfolios exhibit negative expected compounding, making the strategy counterproductive.

\subsubsection{Returns with AR-GARCH Models.} \label{subsubsection: Returns with AR-GARCH Models}
We now extend our analysis from synthetic return models to an empirically calibrated AR-GARCH model. The preceding sections considered independent returns and AR(1) returns with controlled parameters to isolate key drivers of the compounding effect. Here, we adopt a more realistic framework by estimating model parameters directly from historical data.

Rather than conducting parameter sweeps, we estimate parameters via the maximum likelihood method using daily returns for \texttt{SPY} and LETFs, then assess their implications for the compounding effect through Monte Carlo simulation.  Let $X_t$ represent daily returns. A general ARMA($p, q$)-GARCH($r,s$) structure is given by:
\begin{align*}
	X_t &= \mu + \sum_{i=1}^{p}\phi_i X_{t-i} + \sum_{j=1}^{q}\theta_j\epsilon_{t-j} + \epsilon_t,\\
	\epsilon_t &= \sigma_t z_t, \quad z_t \sim \mathcal{N}(0,1),\\
	\sigma_t^2 &= \omega + \sum_{i=1}^{r}\alpha_i \epsilon_{t-i}^2 + \sum_{j=1}^{s}\beta_j\sigma_{t-j}^2.
\end{align*}
For tractability and empirical relevance, we adopt the AR(1)-GARCH(1,1) specification:
\begin{align*}
	X_t &= \mu + \phi X_{t-1} + \epsilon_t,\\
	\sigma_t^2 &= \omega + \alpha \epsilon_{t-1}^2 + \beta \sigma_{t-1}^2.
\end{align*}

Our empirical results reveal that, under AR(1)-GARCH(1,1), the dominant driver of LETF performance is volatility persistence, as captured by the sum $\alpha + \beta$, rather than return autocorrelation. While the AR(1) coefficient $\phi$ remains statistically significant, its impact on LETF performance is diminished by the presence of volatility clustering.

Indeed, we fit the AR(1)-GARCH(1,1) model to historical daily returns for \texttt{SPY} and several LETFs (\texttt{SSO, SPXL, SDS, SH}) using maximum likelihood estimation (MLE), see \cite{casella2024statistical}. The data ranges from February 2010 to December 2023. The estimation results, reported in Table~\ref{table: garch_results_full}, indicate mild mean-reversion (negative AR(1) coefficients) and strong volatility clustering, with significant GARCH parameters.

\begin{table}[htbp]
	\small
	\centering
	\caption{AR(1)-GARCH(1,1) Parameter Estimates for ETFs and LETFs}
	\label{table: garch_results_full}
	\begin{tabular}{lccccc}
		\toprule
		ETF & $\mu$ (Const.) & AR(1) $\phi$ & $\omega$ & $\alpha$ & $\beta$ \\
		\midrule
		\texttt{SPY} & 0.0918$^{***}$ (0.0130) & -0.0490$^{***}$ (0.0188) & 0.0357$^{***}$ (0.0071) & 0.1747$^{***}$ (0.0221) & 0.7969$^{***}$ (0.0212) \\
		\texttt{SSO} & 0.1674$^{***}$ (0.0259) & -0.0433$^{**}$ (0.0191) & 0.1434$^{***}$ (0.0286) & 0.1786$^{***}$ (0.0223) & 0.7939$^{***}$ (0.0211) \\
		\texttt{SPXL} & 0.2413$^{***}$ (0.0389) & -0.0384$^{**}$ (0.0191) & 0.3268$^{***}$ (0.0634) & 0.1805$^{***}$ (0.0220) & 0.7923$^{***}$ (0.0208) \\
		\texttt{SDS} & -0.1973$^{***}$ (0.0260) & -0.0483$^{***}$ (0.0187) & 0.1338$^{***}$ (0.0280) & 0.1760$^{***}$ (0.0212) & 0.7976$^{***}$ (0.0209) \\
		\texttt{SH} & -0.0975$^{***}$ (0.0130) & -0.0487$^{***}$ (0.0187) & 0.0340$^{***}$ (0.0069) & 0.1770$^{***}$ (0.0220) & 0.7963$^{***}$ (0.0207) \\
		\bottomrule
		\multicolumn{6}{l}{\small Standard errors in parentheses; significance: $^{***}p<0.01$ and $^{**} p <0.05.$}\\
	\end{tabular}
\end{table}

\textit{Monte Carlo Simulation and Compounding Effect}. We use the estimated AR(1)-GARCH(1,1) model parameters to simulate 10,000 return paths for \texttt{SPY} and its leveraged counterparts over a one-year horizon (252 trading days). From these simulations, we rigorously quantify the compounding~effect:
\begin{equation}
	\mathtt{CE}_{252} = \left(\prod_{t=1}^{252}(1+\beta X_t)-1\right) - \beta \left(\prod_{t=1}^{252}(1+X_t)-1\right),
\end{equation}
where $\beta$ denotes the ETF leverage ratio and $X_t$ denotes simulated daily returns of \texttt{SPY}. The results from these comprehensive simulations are summarized in Table~\ref{table:volatility_effect_simulation}.
Specifically, the table demonstrates significant compounding effects that become larger as leverage increases. We see that volatility clustering affects the performance of LETFs, producing consistent deviations from the stated leverage multiples, confirming the importance of explicitly modeling volatility and serial dependence in the model.

\begin{table}[htbp]
	\centering
	\caption{Simulated Compounding Effects Using AR(1)-GARCH(1,1) for \texttt{SPY} and Its Associated LETFs}
	\label{table:volatility_effect_simulation}
	\begin{tabular}{lcc}
		\toprule
		LETF Ticker ($\beta$) & Expected Compounding Effect & Standard Deviation \\
		\midrule
		\texttt{SSO} (2x) & \text{0.0744} & \text{0.1844} \\
		\texttt{SPXL} (3x) & \text{0.2443} & \text{0.6383} \\
		\texttt{SDS} (-2x) & \text{0.1664} & \text{0.2756} \\
		\texttt{SH} (-1x) & \text{0.0597} & \text{0.1063} \\
		\bottomrule
	\end{tabular}
\end{table}

\medskip
\textit{Autocorrelation Versus Volatility Clustering.} \rm
While the AR(1) coefficients estimated in Table~\ref{table: garch_results_full} are statistically significant, their magnitudes are modest compared to the strong persistence exhibited in the GARCH terms (i.e., high values of $\alpha + \beta$). This suggests that, in empirical return dynamics, volatility clustering exerts a more dominant influence on compounding effects than return autocorrelation alone. 

However, this does not contradict our theoretical findings. Instead, it shows that the relative impact of autocorrelation versus volatility varies by market regime and time scale. Specifically, in low-volatility environments, autocorrelation dominates, see Section~\ref{subsection: Monte Carlo Simulations with AR(1) Model}, whereas in high-volatility periods with clustering, compounding is more sensitive to volatility shocks. 

This interplay implies that the predictive power of autocorrelation diminishes in the presence of highly persistent volatility, as shown in our simulations under the AR(1)-GARCH(1,1) framework.
A full decomposition of these effects remains an important direction for future work. In summary, autocorrelation effects govern compounding behavior in stable regimes, whereas in highly persistent volatility regimes, volatility clustering may overshadow autocorrelation-induced deviations from LETF targets.

\medskip
\textit{Additional Robustness Test: QQQ and Its Associated LETFs}.
To test the robustness of our findings, we expand our empirical analysis beyond the S\&P 500 ETF (\texttt{SPY}) to the NASDAQ-100 ETF (\texttt{QQQ}) and its leveraged counterparts. Given that \texttt{QQQ} tracks a more volatile technology-focused index, this serves as a stress test for our theoretical framework. We apply the same AR(1)-GARCH(1,1) estimation and Monte Carlo simulations to assess whether LETFs on \texttt{QQQ} exhibit similar compounding effects as those based on SPY. The result is summarized in Table~\ref{table: volatility_effect_simulation_for_QQQ}.
The larger compounding effect observed for \texttt{QQQ}-based LETFs compared to SPY-based LETFs can be attributed to \texttt{QQQ}’s historically higher realized volatility. Since volatility drag increases nonlinearly with volatility, this leads to more pronounced deviations in \texttt{QQQ}-based LETFs.

The results above suggest that autocorrelation-driven effects (trend-following benefits or mean-reverting losses) might be weaker in markets where volatility clustering is strong. While our findings hold under AR(1)-GARCH(1,1) dynamics, future work could explore alternative specifications, such as regime-switching or rough volatility models, to further validate the generality of our results.

\begin{table}[htbp]
	\centering
	\caption{Simulated Compounding Effects Using AR(1)-GARCH(1,1) for \texttt{QQQ} and Its Associated LETFs}
	\label{table: volatility_effect_simulation_for_QQQ}
	\begin{tabular}{lcc}
		\toprule
		LETF Ticker ($\beta$) & Mean Compounding Effect & Standard Deviation \\
		\midrule
		\texttt{QLD} (2x) & \text{0.1206} & \text{0.2564} \\
		\texttt{TQQQ} (3x) & \text{0.4027} & \text{0.9647} \\
		\texttt{QID} (-2x) & \text{0.2483} & \text{0.3659} \\
		\texttt{SQQQ} (-3x) & \text{0.4560} & \text{0.6313} \\
		\bottomrule
	\end{tabular}
\end{table}

\subsubsection{Portfolio Construction Implications: AR-GARCH Model.}
In the AR-GARCH setting, the presence of volatility clustering diminishes the predictive power of short-term autocorrelation. While simulations suggest positive compounding effect under estimated parameters, the sign of 
$\mathbb{E}[\mathtt{CE}_n]$ is not guaranteed in theory. Therefore, arbitrage-like portfolios based on compounding asymmetry require careful calibration to prevailing volatility and autocorrelation regimes.

\subsection{Compounding Effect for Continuous-Time Model}
This section examines a continuous-time regime-switching model.
Let $S_t$ be the ETF price process that evolves according to the regime-dependent geometric Brownian motion, see \cite{di1995mean}, i.e., with $S_0 >0$,
$$
dS_t = \mu_{Z_t} S_t dt + \sigma_{Z_t} S_t dW_t
$$
where $\{Z_t\}_{t\geq 0} \in \{1,2,\dots, M\}$ is a continuous-time Markov chain with generator matrix $Q=(Q_{ij})_{1\leq i,j \leq M}$, $Z_t$ governs both the drift~$\mu_{Z_t}$ and volatility $\sigma_{Z_t}$ in each regime, and~$\{W_t\}_{t\geq 0}$ is the Wiener process on a filtered probability space $(\Omega,  \mathcal{F}, \mathbb{P})$ with filtration $\mathcal{F}_t$. We further assume that~$Z_{(\cdot)}$ and $W_{(\cdot)}$ are independent. With $S_0 >0$, the solution of $S_t$ is given by
\begin{align} \label{eq: GBM S_t_regime_switching}
	S_t = S_0 \exp\left( \int_0^t \left( \mu_{Z_s}  - \frac{1}{2}   \sigma_{Z_s}^2 \right) ds + \int_0^t \sigma_{Z_s} dW_s\right).
\end{align}

For an LETF with leverage ratio $ \beta \in \mathbb{Z} \setminus \{0\} $, the corresponding price dynamics, denoted by~$L_t$, can be derived from the underlying ETF price $S_t$ as:
\begin{align} \label{eq: LETF dynamics_regime_switching}
	\frac{d L_t }{L_t} = \beta \frac{dS_t}{S_t}   - fdt.
\end{align}
This formulation reflects how LETFs magnify the returns of their underlying ETFs by a fixed leverage factor $\beta$.
Additionally, the  LETF price dynamics~\eqref{eq: LETF dynamics_regime_switching} with leverage $ \beta \in \mathbb{Z} \setminus \{0\}  $ has
solution
\begin{align} \label{eq: LETF L_t_regime_switching}
	L_t =L_0 \exp\left( \int_0^t \left(  \beta \mu_{Z_s} - \frac{1}{2}  \beta^2 \sigma_{Z_s}^2 \right) ds - ft +  \beta \int_0^t \sigma_{Z_s} dW_s \right).
\end{align}
The derivation of Equation~\eqref{eq: LETF L_t_regime_switching} is relegated to Appendix~\ref{appendix: Compounding Effect for Continuous-Time Return Model}.
%\end{proof}

\medskip
\begin{remark} \rm
	$(i)$	The source of autocorrelation is the persistence of the Markov state $Z_t$: If $\mathbb{P}(Z_{t+1} = Z_t) > 1/2$, then the return process exhibits positive autocorrelation. Negative autocorrelation can be generated by alternating regimes (e.g., mean-reversion via switching between up/down trends).
	$(ii)$ The continuous-time model implicitly assumes instantaneous and frictionless rebalancing, so that leverage is continuously maintained at level $\beta$. This differs from discrete-time implementations where rebalancing occurs on daily or longer intervals. In practice, continuous rebalancing is unachievable; however, the model provides a theoretical upper bound for compounding effects.
\end{remark}

\medskip
\begin{theorem}[No Intrinsic Volatility Drag in Expectation] \label{theorem: expected compounding effect for diffusion prices_regime}
	Let $\beta \in \mathbb{Z} \setminus \{0\}$ denote the leverage ratio, and consider the continuous-time returns of an ETF and a corresponding LETF, denoted by~$R_t^{\rm ETF}$ and $R_t^{\rm LETF, \beta}$, respectively. Then, the expected compounding effect can be approximated by
	\begin{align} \label{eq: expected compounding effect in regime switching}
		\mathbb{E}[\mathtt{CE}_t]
		&\approx	\sum_{j=1}^{M} \pi_j(t) \left[  \exp\left( (\beta \mu_j  - f )t \right) - \beta \exp\left(  \mu_j t \right) \right] + (\beta - 1).
	\end{align}
	where $\pi_j(t) := \mathbb{E}\left[ \frac{1}{t} \int_0^t \mathbb{1}_{\{Z_s = j\}} ds \right]$ for each $j=1,\dots,M$. 
\end{theorem}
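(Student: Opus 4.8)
The plan is to compute $\mathbb{E}[R_t^{\rm LETF,\beta}]$ and $\mathbb{E}[R_t^{\rm ETF}]$ separately, conditioning on the entire path of the Markov chain $Z_{(\cdot)}$, and then assemble the difference according to Definition~\ref{definition: Compounding Effect}. Starting from the closed-form solutions~\eqref{eq: GBM S_t_regime_switching} and~\eqref{eq: LETF L_t_regime_switching}, the cumulative returns are $R_t^{\rm ETF} = S_t/S_0 - 1$ and $R_t^{\rm LETF,\beta} = L_t/L_0 - 1$. First I would condition on $\mathcal{F}_t^Z = \sigma(Z_s : 0 \le s \le t)$. Given this $\sigma$-field, the integrals $\int_0^t \sigma_{Z_s}^2\, ds$ and $\int_0^t \sigma_{Z_s}\, dW_s$ are, respectively, deterministic and (since $W$ is independent of $Z$) a mean-zero Gaussian with variance $\int_0^t \sigma_{Z_s}^2\, ds$. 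Hence the usual log-normal moment identity gives
\begin{align*}
\mathbb{E}\!\left[\left.\frac{S_t}{S_0}\,\right|\mathcal{F}_t^Z\right] &= \exp\!\left(\int_0^t \mu_{Z_s}\, ds\right),\\
\mathbb{E}\!\left[\left.\frac{L_t}{L_0}\,\right|\mathcal{F}_t^Z\right] &= \exp\!\left(\int_0^t \left(\beta\mu_{Z_s} - f + \tfrac12(\beta^2-\beta^2)\sigma_{Z_s}^2\right) ds\right) = \exp\!\left(\int_0^t (\beta\mu_{Z_s} - f)\, ds\right),
\end{align*}
where the $\sigma^2$ terms cancel exactly because the $-\tfrac12\beta^2\sigma^2$ drift correction in~\eqref{eq: LETF L_t_regime_switching} is precisely offset by the $+\tfrac12\beta^2\sigma^2$ from the Gaussian moment — this cancellation is the formal statement of ``no intrinsic volatility drag'' and should be flagged as the conceptual heart of the result.

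Next I would take the outer expectation over the chain. Writing $A_t := \int_0^t \mu_{Z_s}\, ds$ and noting $A_t = \sum_{j=1}^M \mu_j \int_0^t \mathbb{1}_{\{Z_s=j\}}\, ds = t\sum_j \mu_j \bar{\pi}_j(t)$ where $\bar\pi_j(t) := \tfrac1t\int_0^t \mathbb{1}_{\{Z_s=j\}}\, ds$ is the (random) occupation fraction, the exact expressions are $\mathbb{E}[R_t^{\rm ETF}] = \mathbb{E}[\exp(A_t)] - 1$ and $\mathbb{E}[R_t^{\rm LETF,\beta}] = \mathbb{E}[\exp(\beta A_t - ft)] - 1$. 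The remaining task is the approximation step that produces~\eqref{eq: expected compounding effect in regime switching}: one replaces $\exp(c A_t)$ by a term-by-term sum over regimes, i.e. invokes the approximation $\mathbb{E}[\exp(c A_t)] \approx \sum_j \pi_j(t)\exp(c\mu_j t)$ with $\pi_j(t) = \mathbb{E}[\bar\pi_j(t)]$. I would justify this by a first-order (Jensen-type / small-$\mu_j t$) expansion: $\exp(c A_t) = \exp(ct\sum_j \mu_j \bar\pi_j(t)) \approx \sum_j \bar\pi_j(t)\exp(c\mu_j t)$ using $\sum_j \bar\pi_j(t) = 1$ and convexity, with the error controlled by the spread of the $\mu_j$ and the variance of the occupation times; taking expectations then yields the stated form. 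Assembling, $\mathbb{E}[\mathtt{CE}_t] = \mathbb{E}[R_t^{\rm LETF,\beta}] - \beta\mathbb{E}[R_t^{\rm ETF}] \approx \sum_j \pi_j(t)[\exp((\beta\mu_j-f)t) - \beta\exp(\mu_j t)] - 1 + \beta$, which is exactly~\eqref{eq: expected compounding effect in regime switching} after collecting the constant $(\beta-1)$.

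The main obstacle is making the occupation-time approximation precise and honest: the map $\bar\pi \mapsto \exp(ct\sum_j\mu_j\bar\pi_j)$ is convex, so $\mathbb{E}[\exp(cA_t)] \ge \sum_j \pi_j(t)\exp(c\mu_j t)$ by Jensen, and equality holds only in the degenerate no-switching limit; thus the ``$\approx$'' genuinely requires either a short-horizon regime ($\|Q\|t$ and $\max_j|\mu_j|t$ small, so the chain barely moves and the occupation fractions concentrate) or an explicit second-order remainder in terms of $\mathrm{Var}(A_t)$. I would state the precise asymptotic regime under which the approximation is taken, note that $\pi_j(t) \to \pi_j$ (the stationary distribution) as $t\to\infty$ when the chain is ergodic so that the formula connects to the long-run average, and relegate the Gaussian-moment computation and the occupation-time bookkeeping to the appendix, mirroring how Theorem~\ref{theorem: Compounding Effect Derivation} was handled. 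A secondary point to verify carefully is that the independence of $Z$ and $W$ is used correctly when conditioning — this is what makes $\int_0^t \sigma_{Z_s}\,dW_s$ conditionally Gaussian — and that $\beta\in\mathbb{Z}\setminus\{0\}$ (in particular $\beta<0$ for inverse ETFs) does not affect any sign in the log-normal moment identity, since only $\beta^2\sigma^2$ and $\beta\mu$ enter.
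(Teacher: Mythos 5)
Your proposal is correct and follows essentially the same route as the paper's proof: compute the regime-conditional log-normal moments (where the $\tfrac{1}{2}\beta^2\sigma_j^2$ drift correction cancels against the Gaussian variance term, which is indeed the ``no intrinsic volatility drag'' mechanism) and then invoke the occupation-measure mixture approximation $\mathbb{E}[\exp(cA_t)]\approx\sum_{j}\pi_j(t)\exp(c\mu_j t)$ before assembling $\mathbb{E}[\mathtt{CE}_t]$. Your version is in fact slightly sharper, since you first condition exactly on the full path of $Z$ (using independence of $Z$ and $W$) so that the only approximation is the mixture step, and you correctly identify via Jensen the direction and regime of validity of that error --- a point the paper addresses only informally in the remark following the theorem.
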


\begin{proof}
	See Appendix~\ref{appendix: Compounding Effect for Continuous-Time Return Model}.
\end{proof}

\begin{remark} \rm
	$(i)$ Equation~\eqref{eq: expected compounding effect in regime switching} becomes exact in the limit of constant regime, i.e., when $Z_t  =j $ for all $t \in [0, T]$; see Corollary~\ref{corollary: Nonnegative CE in Single Regime} to follow. More generally, it serves as a first-order approximation assuming that regime switches are relatively infrequent over the horizon $[0,t]$, and that $Z$ and $W$ are independent.
	In this case, the expected occupation measure $\pi_j(t)$ reflects the time-averaged regime weights and provides a tractable means of interpreting the compounding effect.
	$(ii)$ When~$Z_t$ fluctuates frequently, the approximation becomes less accurate. Nevertheless, the representation in~\eqref{eq: expected compounding effect in regime switching} offers a useful analytical insight into how compounding varies with regime dominance.
\end{remark}

\subsubsection{Three-Regimes Case}.
Let $\beta \in \mathbb{Z} \setminus \{0\}$ denote the leverage ratio, $f \ge 0$ the fixed expense fee, and let $\mu_j \in \mathbb{R}$ denote the drift in regime $j \in \{1,2,3\}$, characterized as follows:
\begin{itemize}
	\item Regime $j=1$: Trending Up ($\mu_1 > 0$)
	\item Regime $j=2$: Trending Down ($\mu_2 < 0$)
	\item Regime $j=3$: Oscillating (Mean-Reverting), where $\mu_3 \approx 0$ and $\sigma_3^2 \gg |\mu_3|$
\end{itemize}
In Regime $3$, although $\mu_3 \approx 0$, the volatility $\sigma_3$ is large, leading to substantial pathwise fluctuation.
Let $\pi_j(t) := \mathbb{E}\left[ \frac{1}{t} \int_0^t \mathbb{1}_{\{Z_s = j\}} ds \right]$ denote the occupation measure of regime $j$ over the interval $[0,t]$, and define the regime-specific compounding kernel:
\[
\Phi_j(t; \beta, f) := \exp((\beta \mu_j - f)t) - \beta \exp(\mu_j t).
\]
Then, the expected compounding effect is approximated by:
\[
\mathbb{E}[\mathtt{CE}_t] \approx \sum_{j=1}^3 \pi_j(t) \Phi_j(t; \beta, f) + (\beta - 1).
\]

\begin{theorem}[Sign of Expected CE under Regime Switching]\label{theorem: Sign of Expected CE under Regime Switching}
	Let $\beta \in \mathbb{Z} \setminus \{0\}$ and $f \ge 0$ be fixed, and suppose $(\mu_1, \mu_2, \mu_3)$ are given constants with $\mu_1 > 0$, $\mu_2 < 0$, and $\mu_3 \approx 0$. Then the sign of $\mathbb{E}[\mathtt{CE}_t]$ is determined entirely by the regime occupation vector $\pi(t) = (\pi_1(t), \pi_2(t), \pi_3(t))$. Specifically:
	\begin{itemize}
		\item If $\sum_{j=1}^3 \pi_j(t) \Phi_j(t; \beta, f) > 1 - \beta$, then $\mathbb{E}[\mathtt{CE}_t] > 0$.
		\item If $\sum_{j=1}^3 \pi_j(t) \Phi_j(t; \beta, f) < 1 - \beta$, then $\mathbb{E}[\mathtt{CE}_t] < 0$.
		\item If $\sum_{j=1}^3 \pi_j(t) \Phi_j(t; \beta, f) = 1 - \beta$, then $\mathbb{E}[\mathtt{CE}_t] = 0$.
	\end{itemize}
\end{theorem}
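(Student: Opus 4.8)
The plan is to derive the statement directly from Theorem~\ref{theorem: expected compounding effect for diffusion prices_regime}, specialized to $M = 3$. First I would recall the conclusion of that theorem, equation~\eqref{eq: expected compounding effect in regime switching}, which in the three-regime case reads
\[
\mathbb{E}[\mathtt{CE}_t] \approx \sum_{j=1}^{3} \pi_j(t)\left[ \exp\left((\beta\mu_j - f)t\right) - \beta\exp(\mu_j t)\right] + (\beta - 1),
\]
and observe that the bracketed quantity is precisely the regime-specific compounding kernel $\Phi_j(t;\beta,f)$ defined just above the theorem. Setting $\Psi(t) := \sum_{j=1}^{3}\pi_j(t)\,\Phi_j(t;\beta,f)$, this becomes $\mathbb{E}[\mathtt{CE}_t] \approx \Psi(t) + (\beta - 1) = \Psi(t) - (1 - \beta)$.

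The trichotomy is then immediate by rearrangement: $\Psi(t) > 1 - \beta$, $\Psi(t) < 1 - \beta$, and $\Psi(t) = 1 - \beta$ correspond respectively to $\Psi(t) - (1-\beta)$ being positive, negative, and zero, hence to $\mathbb{E}[\mathtt{CE}_t]$ being positive, negative, and zero. For the claim that the sign is \emph{determined entirely by} the occupation vector $\pi(t) = (\pi_1(t), \pi_2(t), \pi_3(t))$, I would note that once $\beta$, $f$, $t$, and the drifts $\mu_1, \mu_2, \mu_3$ are held fixed, each $\Phi_j(t;\beta,f)$ is a fixed real number, so $\Psi(t)$ is an affine functional of $\pi(t)$; moreover $\pi(t)$ lives on the $2$-simplex since $\sum_j \mathbb{1}_{\{Z_s = j\}} \equiv 1$ forces $\sum_j \pi_j(t) = 1$. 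Thus the level set $\{\pi : \Psi(t) = 1 - \beta\}$ is the intersection of a hyperplane with the simplex and separates it into the two relatively open regions on which $\mathbb{E}[\mathtt{CE}_t]$ is respectively positive and negative; the boundary case has measure zero, so the sign is generically strict. As an optional refinement, using $\mu_3 \approx 0$ gives $\Phi_3(t;\beta,f) = e^{-ft} - \beta \approx 1 - \beta$, and then $\mathbb{E}[\mathtt{CE}_t] \approx \pi_1(t)\left(\Phi_1(t;\beta,f) - (1-\beta)\right) + \pi_2(t)\left(\Phi_2(t;\beta,f) - (1-\beta)\right)$, showing that only the up-trend and down-trend occupation times drive the sign.

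Since the entire content is carried by Theorem~\ref{theorem: expected compounding effect for diffusion prices_regime}, there is no genuine analytic obstacle; the one point requiring care---and the one I would treat as the main subtlety---is the interpretation of the symbol $\approx$. I would state explicitly that the sign assertions concern the leading-order approximant $\Psi(t) + (\beta - 1)$ produced by that theorem, which is exact in the constant-regime limit and first-order accurate when regime switches are infrequent over $[0,t]$ and $Z$ is independent of $W$. Whenever $|\Psi(t) - (1 - \beta)|$ exceeds the approximation error, the exact expectation $\mathbb{E}[\mathtt{CE}_t]$ inherits the same sign; this is where the model hypotheses (infrequent switching, small $f$, independence of $Z$ and $W$) are actually used, and away from the separating hyperplane the conclusion is robust to the approximation.
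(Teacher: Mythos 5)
Your proposal is correct and follows essentially the same route as the paper: both derive the trichotomy immediately by specializing the approximation of Theorem~\ref{theorem: expected compounding effect for diffusion prices_regime} to $M=3$, identifying the bracketed term with $\Phi_j(t;\beta,f)$, and rearranging against the offset $(\beta-1)$. Your added remarks on the meaning of $\approx$ and on $\pi(t)$ lying in the simplex are sensible refinements but do not change the substance of the argument.
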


\begin{proof}
	See Appendix~\ref{appendix: Compounding Effect for Continuous-Time Return Model}.
\end{proof}

\begin{corollary}[Nonnegative Compounding Effect in Single Regime] \label{corollary: Nonnegative CE in Single Regime}
	Let $\beta \in \mathbb{Z} \setminus \{0,1\}$ and $Z_t \equiv j$ be constant, i.e., the process remains in a single regime. Then for any $\mu_j \in \mathbb{R}$, $f = 0$, and all $t \ge 0$, we have:
	\[
	\mathbb{E}[\mathtt{CE}_t] = \exp(\beta \mu_j t) - \beta \exp(\mu_j t) + (\beta - 1) \ge 0
	\]
	with equality if and only if $t = 0$ or $\mu_j = 0$ or $\beta = 1$.
\end{corollary}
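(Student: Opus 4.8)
The plan is to reduce the claim to an exact closed-form expression for $\mathbb{E}[\mathtt{CE}_t]$ and then settle nonnegativity by a one-variable calculus argument. Specializing Theorem~\ref{theorem: expected compounding effect for diffusion prices_regime} to $M=1$ makes the occupation measure degenerate, $\pi_1(t)\equiv 1$, so the first-order approximation in~\eqref{eq: expected compounding effect in regime switching} becomes exact and reads $\mathbb{E}[\mathtt{CE}_t] = e^{\beta\mu_j t} - \beta e^{\mu_j t} + (\beta-1)$. To keep the argument self-contained I would instead derive this directly: from~\eqref{eq: GBM S_t_regime_switching} with a constant regime, $\mathbb{E}[S_t/S_0] = e^{(\mu_j-\sigma_j^2/2)t}\,\mathbb{E}[e^{\sigma_j W_t}] = e^{\mu_j t}$, and from~\eqref{eq: LETF L_t_regime_switching} with $f=0$, $\mathbb{E}[L_t/L_0] = e^{(\beta\mu_j-\beta^2\sigma_j^2/2)t}\,\mathbb{E}[e^{\beta\sigma_j W_t}] = e^{\beta\mu_j t}$, both using the Gaussian moment generating function $\mathbb{E}[e^{aW_t}] = e^{a^2 t/2}$. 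Subtracting, $\mathbb{E}[\mathtt{CE}_t] = (e^{\beta\mu_j t}-1) - \beta(e^{\mu_j t}-1) = e^{\beta\mu_j t} - \beta e^{\mu_j t} + (\beta-1)$, which notably is free of $\sigma_j$.

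Next I would set $x := \mu_j t$, which ranges over all of $\mathbb{R}$ as $t\ge 0$ and $\mu_j\in\mathbb{R}$ vary, and analyze $h(x) := e^{\beta x} - \beta e^x + (\beta-1)$ for a fixed integer $\beta\notin\{0,1\}$. One computes $h(0) = 0$ and $h'(x) = \beta\big(e^{\beta x} - e^x\big)$, so $h'(0) = 0$. The crux is a clean sign analysis of $h'$: since $\beta x - x = (\beta-1)x$, the quantity $e^{\beta x}-e^x$ has the same sign as $(\beta-1)x$, hence $h'(x)$ has the same sign as $\beta(\beta-1)x$; and because $\beta(\beta-1) > 0$ for every integer $\beta\ge 2$ and every integer $\beta\le -1$, $h'(x)$ has the same sign as $x$ in all admissible cases. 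Therefore $h$ is strictly decreasing on $(-\infty,0)$ and strictly increasing on $(0,\infty)$, so $h(x)\ge h(0) = 0$ for all $x\in\mathbb{R}$, with equality iff $x=0$.

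Finally I would translate back: $\mathbb{E}[\mathtt{CE}_t] = h(\mu_j t)\ge 0$, with equality iff $\mu_j t = 0$, i.e.\ iff $t=0$ or $\mu_j=0$; the alternative ``$\beta=1$'' is vacuous under the standing hypothesis $\beta\in\mathbb{Z}\setminus\{0,1\}$ but is listed for completeness, since in that case $h\equiv 0$. I do not anticipate a serious obstacle; the only point needing care is treating the two leverage regimes $\beta\ge 2$ and $\beta\le -1$ uniformly, which the observation $\operatorname{sign} h'(x) = \operatorname{sign}\big(\beta(\beta-1)x\big) = \operatorname{sign} x$ accomplishes. A second-derivative check, $h''(0) = \beta(\beta-1) > 0$, confirms that $0$ is a strict local minimum, but the global conclusion genuinely relies on the monotonicity of $h$ away from $0$ rather than on the local test alone.
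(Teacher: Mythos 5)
Your proof is correct, and it takes a genuinely more unified route than the paper's. The paper first reduces to showing $h_\beta(x) = e^{\beta x} - \beta e^x + (\beta-1) \ge 0$ exactly as you do, but then splits into two cases handled by different tools: for $\beta > 1$ it applies Jensen's inequality to the convex exponential with weights $\lambda_1 = 1/\beta$, $\lambda_2 = 1 - 1/\beta$, and for $\beta < 0$ it writes $\beta = -k$ and argues via strict convexity ($h_k'' > 0$) that $x=0$ is the unique global minimizer with value $0$. Your single observation that $\operatorname{sign} h'(x) = \operatorname{sign}\bigl(\beta(\beta-1)x\bigr) = \operatorname{sign}(x)$ whenever $\beta(\beta-1) > 0$ dispatches both leverage regimes at once and yields the equality characterization ($x = \mu_j t = 0$) as an immediate byproduct, whereas the paper's Jensen case gives nonnegativity but does not by itself pin down when equality holds. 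Your derivation of the closed form directly from the moment generating function of $\sigma_j W_t$ (rather than citing the regime-switching approximation as exact for $M=1$) is also a self-contained improvement, and your remark that the resulting expression is free of $\sigma_j$ reinforces the corollary's ``no intrinsic volatility drag'' message. Both arguments are valid; yours is shorter and treats the cases uniformly, while the paper's Jensen step has the minor virtue of being a one-line inequality requiring no calculus.
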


\begin{proof}
	See Appendix~\ref{appendix: Compounding Effect for Continuous-Time Return Model}.
\end{proof}

\begin{remark}[Compounding Effect Under Regime Switching] \rm
	$(i)$ In a regime-switching model where $\pi_j(t)$ varies over time and no single regime dominates, the $\mathbb{E}[\mathtt{CE}_t]$ may be negative even though each $\Phi_j(t; \beta, f) \ge 0$ in isolation (as shown in the corollary for $f=0$). This occurs because regime mixing destroys the directional coherence necessary for compounding to accumulate effectively.
	$(ii)$ While GARCH models account for conditional heteroskedasticity, regime-switching models capture drift and volatility persistence via state dynamics. Volatility clustering in AR-GARCH is mimicked in our model by allowing high-volatility states with persistent occupation probabilities. Hence, the regime-switching framework can subsume AR-GARCH-like behavior when transition rates are sufficiently sticky.
\end{remark}

\subsubsection{Portfolio Construction Implications: Regime Switching Model/}
In the continuous-time regime-switching model, expected compounding effect is a function of regime occupation weights $\pi_j(t)$. Arbitrage-like portfolios, as seen in Section~\ref{subsubsection: Portfolio Construction Implications}, may yield positive expected returns when the process spends significant time in trending regimes with positive drift. However, extended periods in oscillating or high-volatility regimes may negate the compounding benefit. Thus, strategy effectiveness depends critically on regime persistence

%%%%%%%%%%%%
%%% SECTION %%%
%%%%%%%%%%%%

\section{Empirical Studies}\label{section: Empirical Studies}
This section empirically examines the compounding effect of LETFs using historical data from the SPDR S\&P 500 ETF (Ticker: \texttt{SPY}) and the Nasdaq-100 ETF (Ticker: \texttt{QQQ}) under varying market conditions. We aim to validate our theoretical findings that LETFs outperform in trending markets and underperform in mean-reverting markets due to volatility.

\subsection{Hypotheses and Empirical Strategy}

To rigorously test our theoretical predictions, we formulate the following hypotheses:

\textbf{H1 (Trending Markets):} LETFs exhibit a positive compounding effect in strongly trending markets, regardless of direction (e.g., Financial Crisis, Post-Crisis Recovery).

\textbf{H2 (Mean-Reverting Markets):} LETFs exhibit a negative compounding effect in oscillating or mean-reverting markets, as frequent rebalancing amplifies losses (e.g., Sideways Markets).

\textbf{H3 (Volatility Impact):} The compounding effect is stronger for higher leverage ratios (e.g., $\beta =~\pm3$), with larger deviations in high-volatility environments.

To test these hypotheses, we analyze LETF performance across distinct market regimes and compare theoretical with empirical compounding effects. We begin with simulated, synthetic LETF returns and then compare them to realized returns from actual LETF products, adjusting for real-world frictions such as fees and tracking errors.

\subsection{Market Regimes and Compounding Effects}
Below, we focus on six key periods that capture diverse market regimes: the financial crisis (October~2007 to March 2009), its subsequent post-crisis recovery (April 2009 to March 2013), a period of sideways market (February 2014 to September~2015), the COVID-19 pandemic (February~2020 to March 2020), the post-COVID recovery (April 2020 to December 2021), and the~2022 bear market. 
Each timeframe represents a different market phase, allowing us to assess how LETFs behave under sharply declining, recovering, or stagnant markets. 

\subsubsection{Theoretical Compounding Effects via Synthetic LETFs.}
Table~\ref{table: Combined Theoretical Compounding Effects of SPY and QQQ} presents theoretical compounding effects for \emph{synthetic} LETFs.  Here, theoretical volatility refers to the effect of synthetic LETFs constructed using the benchmark ETFs \texttt{SPY} and \texttt{QQQ} with different leverage ratios $\beta \in \{-3, -2, -1, 2, 3\}$, assuming perfect replication without tracking error or fees.  Later in this section, we will compare this with the \textit{empirical} compounding effect, computed by using the actual LETFs data.

The financial crisis (October 2007 to March 2009) and the subsequent post-crisis recovery (April~2009 to March 2013) provide contrasting phases of sharp decline and sustained growth. During the crisis, the S\&P~500 index lost more than 50\% of its value, falling from over 1500 to below 700 points. In response, the Federal Reserve introduced quantitative easing, fostering a low-interest-rate environment that gradually fueled economic recovery. By 2013, the index had fully recovered to pre-crisis levels. As shown in Table~\ref{table: Combined Theoretical Compounding Effects of SPY and QQQ}, positive-leverage synthetic LETFs exhibit positive compounding effects in both recovery periods, particularly strong for $\beta = 3$. For inverse LETFs ($\beta < 0$), the compounding effects during the financial crisis are predominantly negative for \texttt{QQQ}, while \texttt{SPY} shows mixed results with $\beta = -1$ exhibiting a small positive compounding effect, but negative effects for $\beta = -2$ and $\beta = -3$. This suggests that LETFs tend to outperform expectations in strongly trending markets, making them attractive for trend following~strategies.

However, in mean-reverting markets, the effects differ by market regime. During the Sideways Market period, positive-$\beta$ synthetic LETFs consistently show negative compounding effects for both ETFs. During the COVID-19 pandemic period,~\texttt{QQQ} exhibits negative compounding effects across all leverage ratios, while \texttt{SPY} shows predominantly negative effects with a negligible positive exception for $\beta = 3$. The compounding effect is generally strongest in magnitude for highly leveraged LETFs ($\beta = \pm3$) and becomes more pronounced in high-volatility environments. The robustness check using \texttt{QQQ} confirms these findings, with stronger deviations due to sector volatility.

These results largely support Hypothesis {\bf H1}: In upward-trending regimes (2009--13, 2020--21), positive-$\beta$ synthetic LETFs exhibit consistently positive compounding effects. For downward-trending regimes (2007--09), with positive-$\beta$ LETFs showing positive compounding effects as expected, but inverse LETFs showing mixed results that partially support the hypothesis. The data also strongly confirms Hypothesis {\bf H2}: In the Sideways Market period, positive-$\beta$ LETFs for both ETFs exhibit negative compounding effects. Finally, our findings provide clear evidence for Hypothesis {\bf H3}: Across all market regimes, compounding effects are consistently largest in magnitude for the highest leverage ratios ($\beta = \pm3$), with extreme deviations occurring during high-volatility periods such as the Post-COVID Recovery and Financial Crisis.

\begin{table}[htbp]
	\scriptsize
	\centering
	\caption{Theoretical Compounding Effects of Portfolios Across Six Market Regimes}
	\label{table: Combined Theoretical Compounding Effects of SPY and QQQ}
	\begin{tabular}{l c rrrrr}
		\toprule
		& & \multicolumn{5}{c}{Leverage Ratio ($\beta$)} \\
		\cmidrule(lr){3-7}
		Market Condition & Benchmark ETF & -3x & -2x & -1x & 2x & 3x \\
		%	& & (\texttt{SQQQ}) & (\texttt{SDS/QID}) & (\texttt{SH}) & (\texttt{SSO/QLD}) & (\texttt{SPXL/TQQQ}) \\
		\midrule
		\multirow{2}{*}{Financial Crisis} 
		& \texttt{SPY} & -0.733 & -0.144 & 0.034 & 0.160 & 0.475 \\
		& \texttt{QQQ} & -0.883 & -0.293 & -0.035 & 0.105 & 0.346 \\
		\addlinespace
		\multirow{2}{*}{Post-Crisis Recovery} 
		& \texttt{SPY} & \textbf{2.344} & \textbf{1.349} & 0.515 & 0.651 & \textbf{1.863}\\
		& \texttt{QQQ} & \textbf{3.011} & \textbf{1.770} &0.696 & \textbf{1.005}& \textbf{3.038} \\
		\addlinespace
		\multirow{2}{*}{Sideways Market} 
		& \texttt{SPY} & -0.016 & -0.016 & -0.008 & -0.018 & -0.064 \\
		& \texttt{QQQ} & 0.117 & 0.048 & 0.011 & -0.007 & -0.043 \\
		\addlinespace
		\multirow{2}{*}{COVID-19 Pandemic} 
		& \texttt{SPY} & -0.332 & -0.141 & -0.037 & -0.007 & 0.005 \\
		& \texttt{QQQ} & -0.398 & -0.189 & -0.057 & -0.034 & -0.076 \\
		\addlinespace
		\multirow{2}{*}{Post-COVID Recovery} 
		& \texttt{SPY} & \textbf{2.034} & \textbf{1.177} & 0.459 & 0.756 & \textbf{2.670} \\
		& \texttt{QQQ} & \textbf{2.657} & \textbf{1.561} & 0.618 & \textbf{1.047} & \textbf{3.654} \\
		\addlinespace
		\multirow{2}{*}{2022 Bear Market} 
		& \texttt{SPY} & -0.251 & -0.105 & -0.027 & 0.003 & 0.011 \\
		& \texttt{QQQ} & -0.187 & -0.018 & 0.018 & 0.066 & 0.214 \\
		\bottomrule
	\end{tabular}
	\begin{tablenotes}
		
		\item Note: Values in \textbf{bold} represent substantial outperformance (defined as greater than $1.0$).
	\end{tablenotes}
\end{table}

\subsubsection{Empirical Compounding Effects and Robustness Analysis.}
We now conduct our empirical analysis using historical LETF data as presented in Table~\ref{table: empirical-volatility-effects for SPY and QQQ}. For the \texttt{SPY} ETF benchmark, we examine four leveraged ETFs:  \texttt{SDS} $(\beta = -2)$, \texttt{SH} $(\beta = -1)$, \text{SSO} $(\beta = +2)$, and \texttt{SPXL} $(\beta = +3)$. As a robustness check, we also analyze the \texttt{QQQ} ETF with its corresponding leveraged ETFs: \texttt{SQQQ} $(\beta = -3)$, \texttt{QID} $(\beta = -2)$, \texttt{QLD} $(\beta =2)$, \texttt{TQQQ} $(\beta = 3)$.

Table~\ref{table: empirical-volatility-effects for SPY and QQQ} shows a consistent sign match as predicted by the theoretical synthetic LETFs case in Table~\ref{table: Combined Theoretical Compounding Effects of SPY and QQQ} with a certain deviation. The divergence between theoretical and empirical compounding effects increases with $|\beta|$, especially during volatile periods like the COVID-19 crash. Note that transaction costs and fees account for approximately 0.8 to 1.0 percentage points per annum of the observed underperformance, while bid-ask spreads, slippage, and tracking error might contribute additional deviations; see \cite{abdi2017simple} for a simple estimation of bid-ask spread.

In particular, the COVID-19 pandemic onset (February 2020 to March 2020) caused a rapid~34\% drop in the S\&P 500 as global lockdowns disrupted economic activity. Swift monetary policy interventions enabled a quick recovery, with the index reaching new highs by late 2021. Interestingly, during the crash, inverse ETF portfolios underperformed expectations. This indicates that LETFs may fail to deliver the expected outperformance in highly volatile, mean-reverting markets due to erratic price movements.
These results validate Hypothesis {\bf H3} and align with our theoretical predictions: Higher leverage ratios ($\beta = 3$) exhibit greater deviations from theoretical predictions, particularly in high-volatility environments like the COVID-19 pandemic~period.

The 2022 bear market was marked by a 20\% decline in S\&P 500, driven by rising inflation and geopolitical uncertainties, see \cite{bouri2023global} and news released by \cite{CNBCnews}. 
As shown in Table~\ref{table: empirical-volatility-effects for SPY and QQQ}, empirical LETF performance during this period resembled that of the COVID-19 pandemic period.
The compounding effect remained weak due to increased market uncertainty, and as expected, inverse LETFs continued to underperform due to tracking error accumulation.

The sideways market (2014–2015) presents a unique challenge for LETFs, as frequent rebalancing results in negative compounding effects. While most portfolios exhibited the negative compounding effects, an exception arises for \texttt{SQQQ}, which yielded a small positive $\mathtt{CE}$. This reflects net upward drift in \texttt{QQQ} over the window. Frequent rebalancing leads to systematic buying high and selling low, which amplifies losses over time. 
Although mean-reverting markets generally induce negative compounding effects due to rebalancing drag, directional drift---even if modest---can yield a net positive effect, as observed in \texttt{SQQQ} during 2014–15. Thus, \textbf{H2} holds in \textit{expectation} but may be violated in certain drifting but volatile environments.

A robustness check using \texttt{QQQ}-based LETFs, see Table~\ref{table: empirical-volatility-effects for SPY and QQQ}, reveals similar trends but with stronger deviations. The tech-heavy Nasdaq-100 exhibited higher volatility, leading to greater tracking errors and magnified deviations in leveraged products.

\subsubsection{Summary.}
Tables~\ref{table: Combined Theoretical Compounding Effects of SPY and QQQ} and \ref{table: empirical-volatility-effects for SPY and QQQ} are directionally consistent: they agree on when $\texttt{CE}$ should be positive versus negative across $\beta$ and market regimes. The empirical $\texttt{CE}$  is systematically smaller in magnitude (due to fees, financing costs, and imperfect tracking), and some leverage ratios do not exist for the full sample.  
The robustness check using \texttt{QQQ} confirms these findings while highlighting stronger deviations in tech-heavy indices.  
In summary, the empirical results support our theoretical hypotheses:
{\bf H1} is confirmed: LETFs outperform in strongly trending regimes, regardless of direction, due to path-dependent compounding.
{\bf H2} holds broadly but admits exceptions: While most LETFs exhibit negative compounding effects in the sideways market as predicted, inverse LETFs occasionally outperform in flat markets with sufficient directional drift, as observed with \texttt{SQQQ} during 2014-15. This nuance enhances our understanding of when the volatility decay hypothesis may be counterbalanced by other factors.
{\bf H3} is robustly validated across all regimes: Empirical $\mathtt{CE}$ magnitudes scale with leverage and volatility, with the largest deviations occurring in high-volatility periods.

\begin{table}[htbp]
	\centering
	\scriptsize
	\caption{Empirical Compounding Effects of Portfolios Across Six Market Regimes}
	\label{table: empirical-volatility-effects for SPY and QQQ}
	\begin{tabular}{ l c rrrrr }
		\toprule
		& & \multicolumn{5}{c}{Leverage Ratio ($\beta$) and Product Ticker} \\
		\cmidrule(lr){3-7}
		Market Condition & Benchmark ETF & $-3\times$ & $-2\times$ & $-1\times$ & $2\times$ & $3\times$ \\
		& & (\texttt{SQQQ}) & (\texttt{SDS/QID}) & (\texttt{SH}) & (\texttt{SSO/QLD}) & (\texttt{SPXL/TQQQ}) \\
		\midrule
		\multirow{2}{*}{Financial Crisis} 
		& \texttt{SPY} & --- & 0.053 & -0.050 & 0.145 & --- \\
		& \texttt{QQQ} & -1.219 & -0.307 & --- & 0.085 & --- \\
		\addlinespace 
		\multirow{2}{*}{Post-Crisis Recovery} 
		& \texttt{SPY} & --- & \textbf{1.342} & 0.500 & 0.467 & \textbf{1.389} \\
		& \texttt{QQQ} & \textbf{3.189} & \textbf{1.762} & --- & 0.787 & --- \\
		\addlinespace 
		\multirow{2}{*}{Sideways Market} 
		& \texttt{SPY} & --- & -0.021 & -0.018 & -0.043 & -0.114 \\
		& \texttt{QQQ} & 0.112 & 0.041 & --- & -0.039 & -0.096 \\
		\addlinespace 
		\multirow{2}{*}{COVID-19 Pandemic} 
		& \texttt{SPY} & --- & -0.156 & -0.039 & -0.015 & -0.019 \\
		& \texttt{QQQ} & -0.424 & -0.198 & --- & -0.040 & -0.087 \\
		\addlinespace 
		\multirow{2}{*}{Post-COVID Recovery} 
		& \texttt{SPY} & --- & \textbf{1.173} & 0.451 & 0.666 & \textbf{2.434} \\
		& \texttt{QQQ} & \textbf{2.656} & \textbf{1.557} & --- & 0.936 & \textbf{3.370} \\
		\addlinespace 
		\multirow{2}{*}{2022 Bear Market} 
		& \texttt{SPY} & --- & -0.051 & -0.002 & -0.024 & -0.014 \\
		& \texttt{QQQ} & -0.109 & 0.035 & --- & 0.051 & 0.198 \\
		\bottomrule
	\end{tabular}

	\begin{tablenotes}
		
		\item Note: Missing data points (---) indicate values not available for the specific ETF at the given leverage factor.  For example,~\texttt{TQQQ} wasn't launched until February 2010, so data is unavailable for the Financial Crisis period. Similarly, \texttt{SPXL} launched in November 2008; no continuous daily series exists for the full crisis window. Positive values indicate outperformance relative to the target multiple, while negative values indicate underperformance. Values in \textbf{bold} represent substantial outperformance~($> 1.0$).
	\end{tablenotes}
\end{table}

\subsection{Time-Varying Compounding Effect: A Rolling Window Analysis}

While the preceding sections demonstrate that return autocorrelation is a key determinant of LETF performance, the evolution of this relationship over time remains to be characterized. To address this, we examine how the realized compounding effect (CE) varies across market regimes and assess its connection to short-run return persistence. Specifically, we compute 60-, 90-, and~120-day rolling-window estimates of compounding effects for a range of LETFs based on \texttt{SPY} and~\texttt{QQQ}, spanning multiple leverage ratios $\beta \in \{-2, -1, 2, 3\}$.

Figures~\ref{fig: 60-Day Rolling Compounding Effect from 2009 to 2024 (SPY)}--\ref{fig: 120-Day Rolling Compounding Effect from 2010 to 2024 (QQQ)} display the rolling compounding effects from 2009 (\texttt{SPY}) and 2010 (\texttt{QQQ}) to~2024 where the shaded regions represent major market events, including oil crash (mid 2015 to early~2016), and COVID-19 crash (early 2020 to late 2021), and inflation-driven bear market (early-2022).

Several systematic patterns emerge. First, positive CE is observed during persistent upward-trending markets---notably the post-2009 recovery, 2016--2017, and the 2020--2021 bull run---particularly for $\beta = 2$ and $\beta = 3$. In contrast, negative CE is concentrated in volatile or directionless periods, such as 2014--2015 and 2022, where rebalancing erodes returns. These patterns are consistent with the theoretical results in Section~\ref{section: Compounding Effect Analysis}.

\begin{figure}[htbp]
	\centering
	\includegraphics[width=0.9\textwidth]{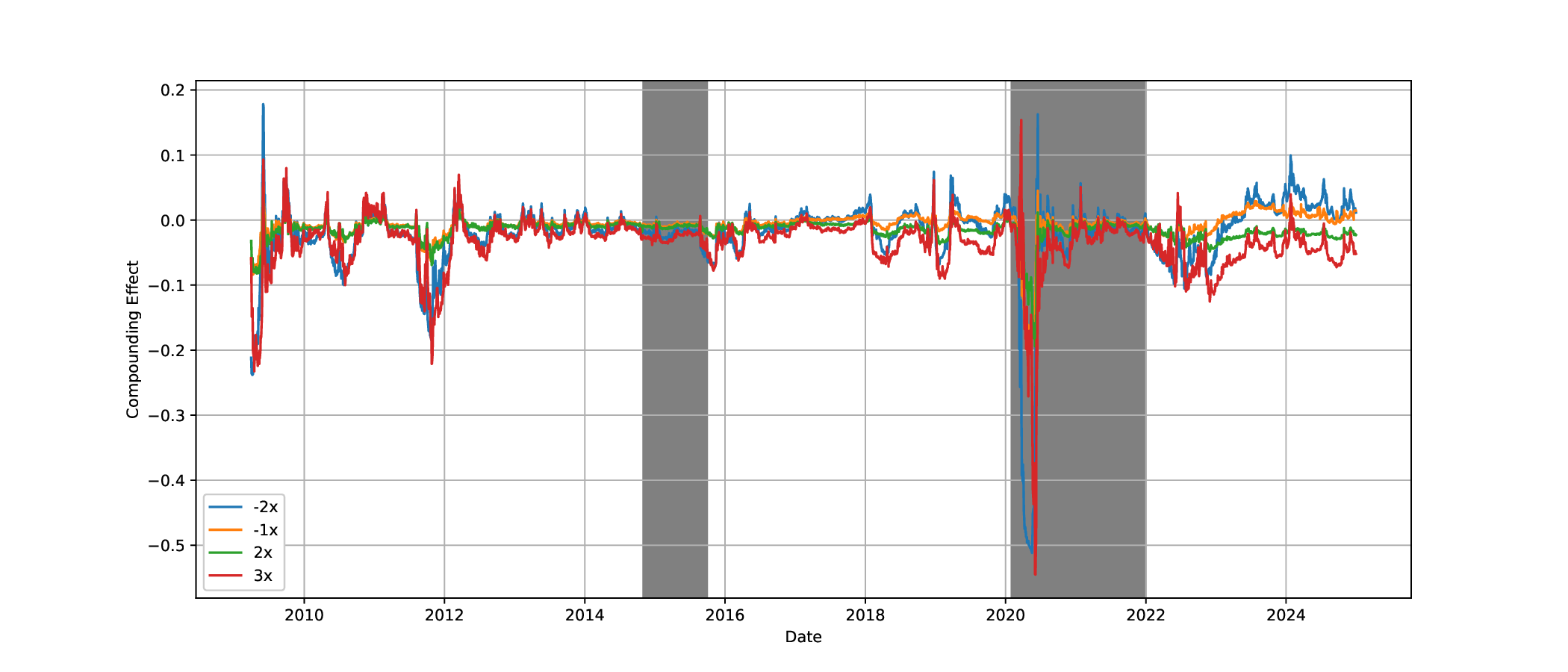}
	\caption{60-Day Rolling Compounding Effect from 2009 to 2024 (\texttt{SPY})}
	\label{fig: 60-Day Rolling Compounding Effect from 2009 to 2024 (SPY)}
\end{figure}

\begin{figure}[htbp]
	\centering
	\includegraphics[width=0.9\textwidth]{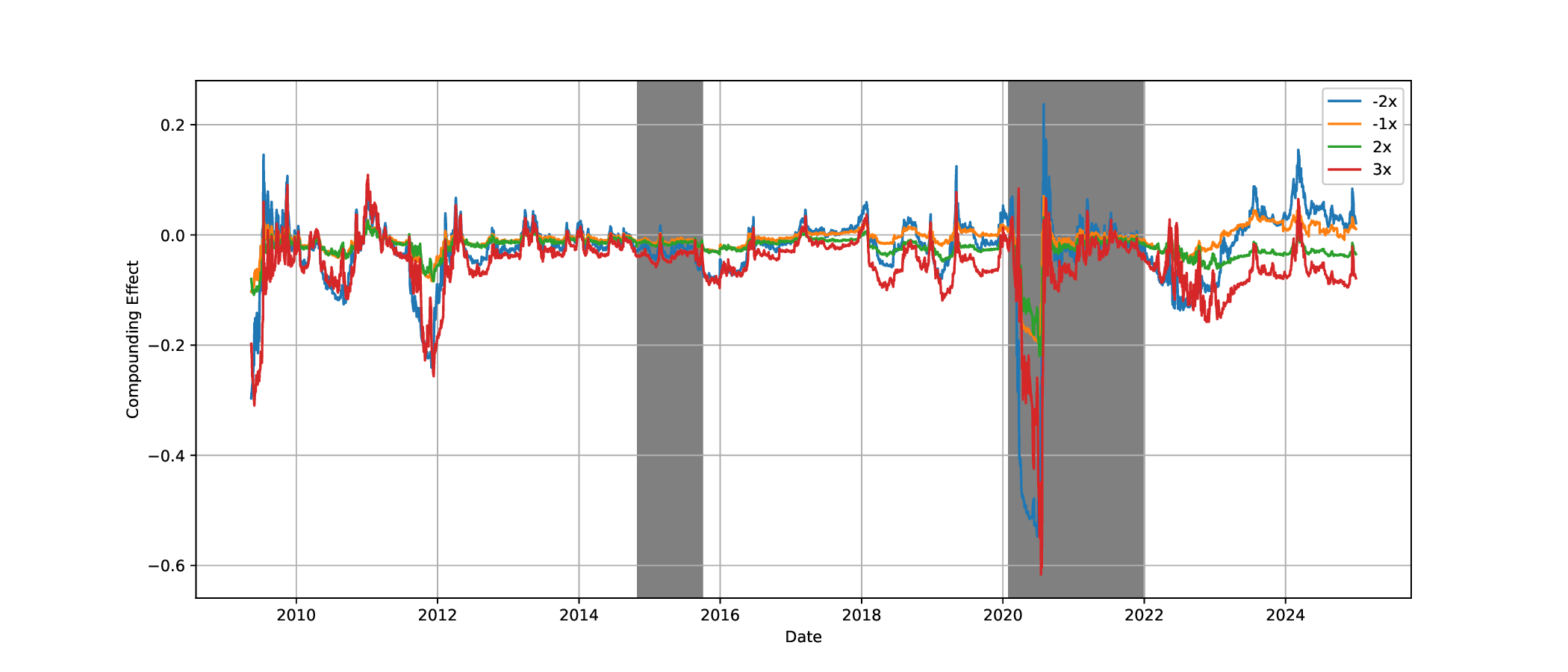}
	\caption{90-Day Rolling Compounding Effect from 2009 to 2024 (\texttt{SPY})}
	\label{fig: 90-Day Rolling Compounding Effect from 2009 to 2024 (SPY)}
\end{figure}

\begin{figure}[htbp]
	\centering
	\includegraphics[width=0.9\textwidth]{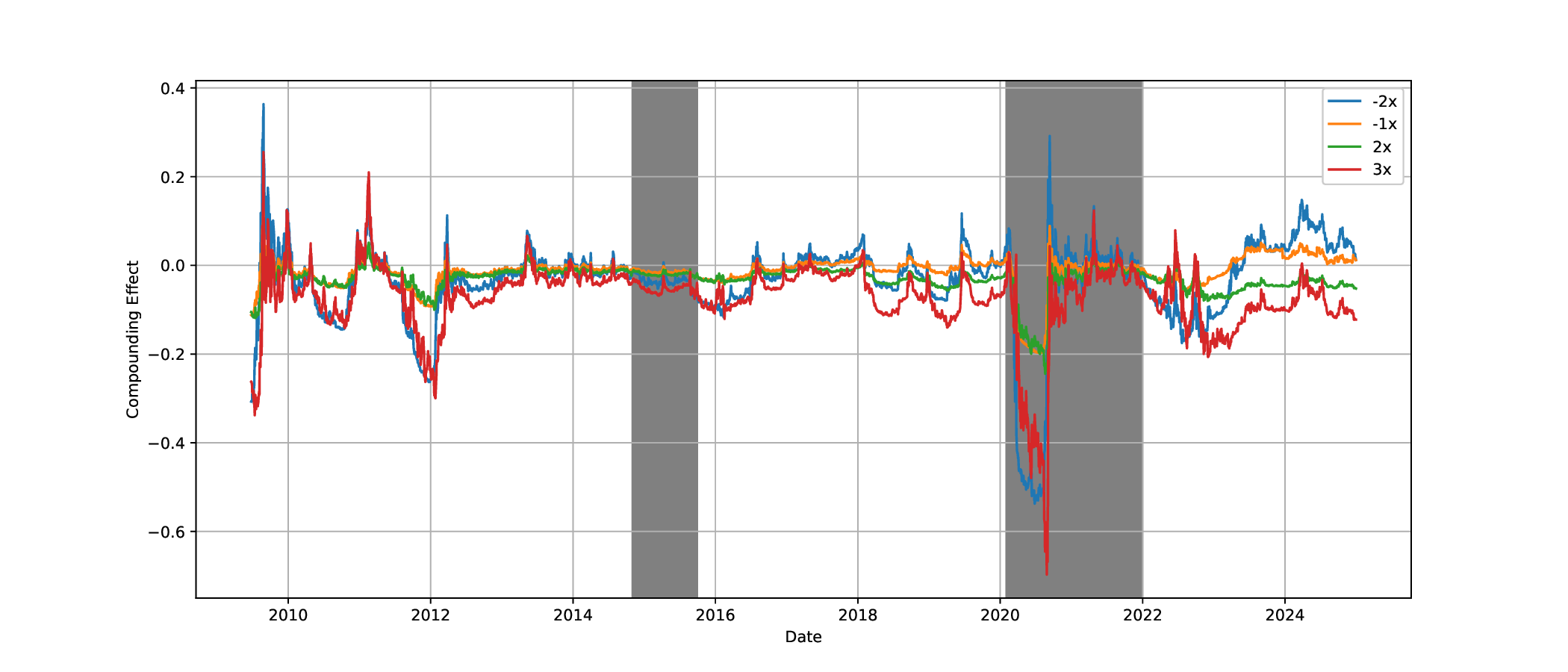}
	\caption{120-Day Rolling Compounding Effect from 2009 to 2024 (\texttt{SPY})}
	\label{fig: 120-Day Rolling Compounding Effect from 2009 to 2024 (SPY)}
\end{figure}

\begin{figure}[htbp]
	\centering
	\includegraphics[width=0.9\textwidth]{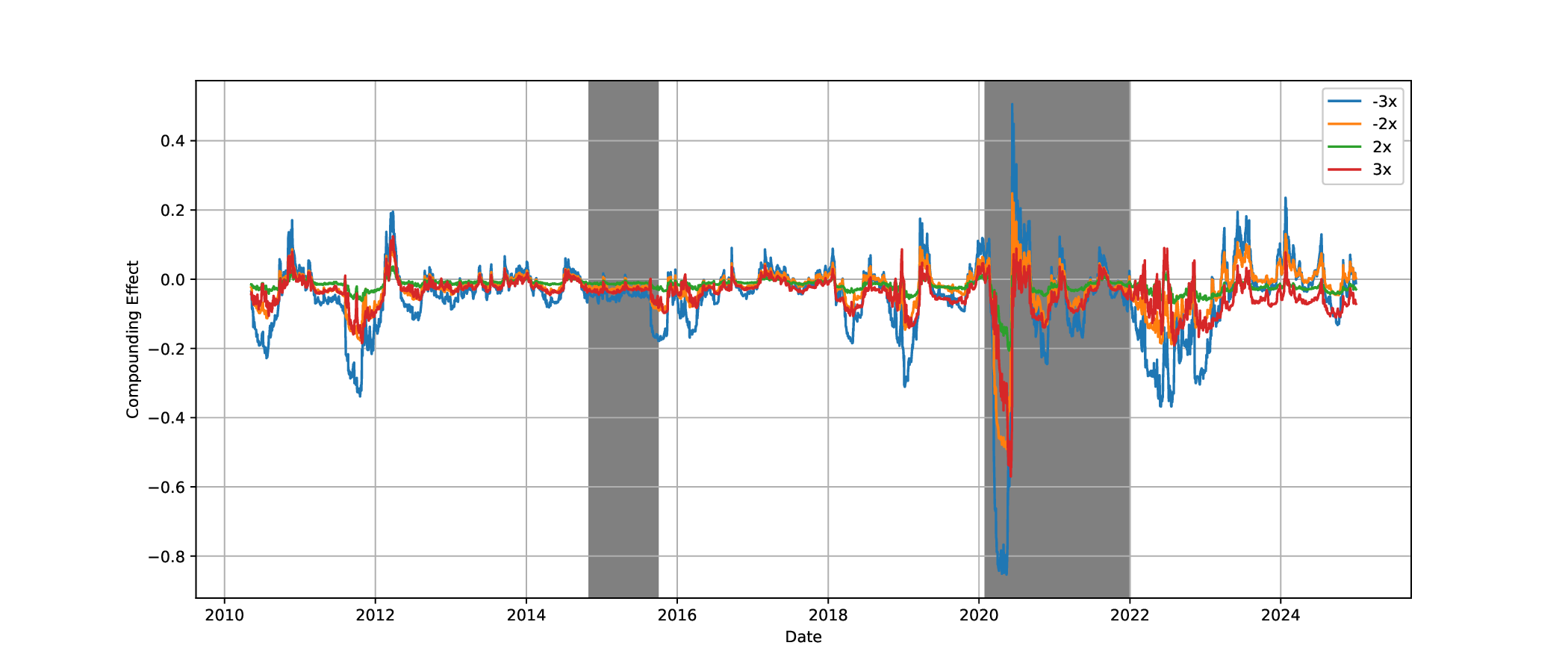}
	\caption{60-Day Rolling Compounding Effect from 2010 to 2024 (\texttt{QQQ})}
	\label{fig: 60-Day Rolling Compounding Effect from 2010 to 2024 (QQQ)}
\end{figure}

\begin{figure}[htbp]
	\centering
	\includegraphics[width=0.9\textwidth]{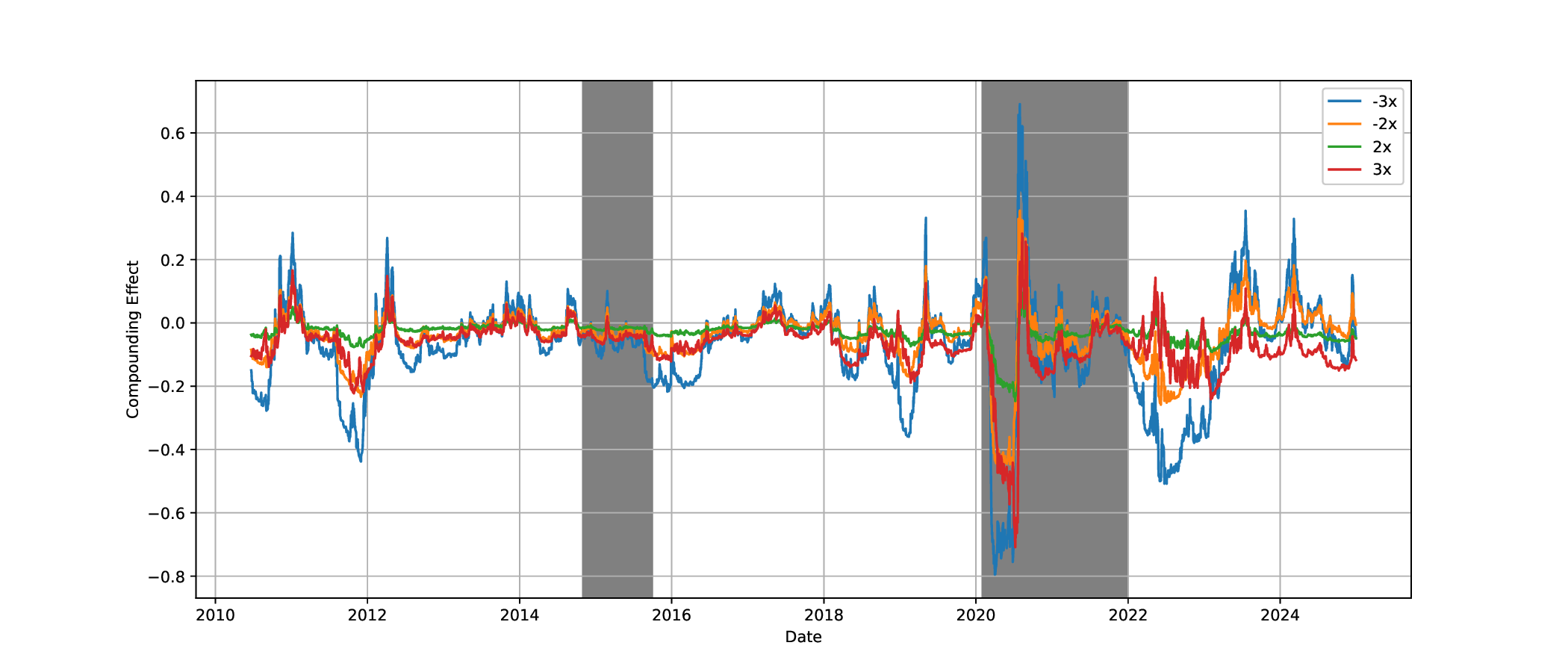}
	\caption{90-Day Rolling Compounding Effect from 2010 to 2024 (\texttt{QQQ})}
	\label{fig: 90-Day Rolling Compounding Effect from 2010 to 2024 (QQQ)}
\end{figure}

\begin{figure}[htbp]
	\centering
	\includegraphics[width=0.9\textwidth]{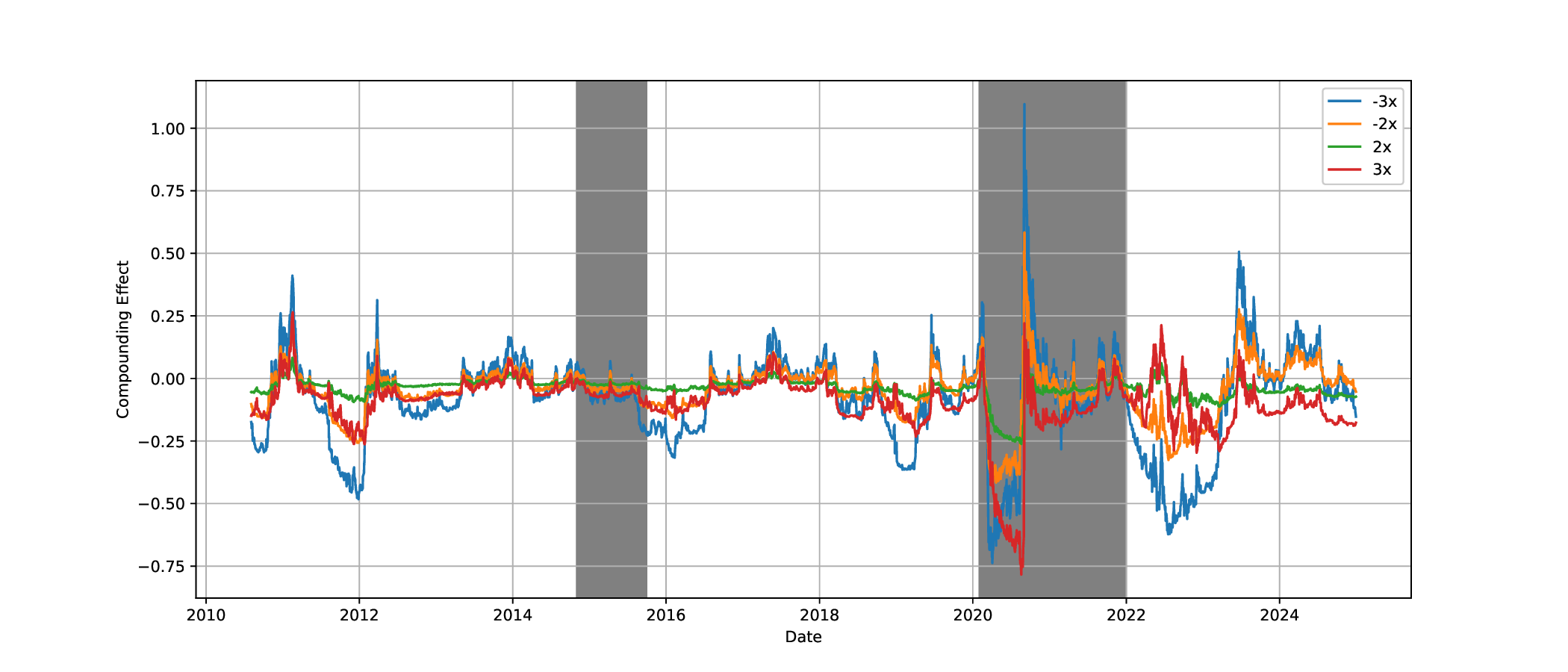}
	\caption{120-Day Rolling Compounding Effect from 2010 to 2024 (\texttt{QQQ})}
	\label{fig: 120-Day Rolling Compounding Effect from 2010 to 2024 (QQQ)}
\end{figure}

To examine the role of short-term momentum, we compute rolling AR(1) coefficients over the same intervals, presented in Figures~\ref{fig: 60-Day Rolling AR(1) Coefficients from 2009 to 2014 (SPY)}--\ref{fig: 120-Day Rolling AR(1) Coefficients from 2010 to 2014 (QQQ)}. These plots reveal fluctuations in $\phi$, often ranging between $-0.4$ and $+0.6$. Periods with elevated AR(1) coefficients, such as 2010–2011, 2016–2017, and early 2020, correspond closely to intervals of strongly positive compounding effect. Conversely, during stretches of low or negative autocorrelation---such as late 2015 or mid-2022---compounding effects decay~sharply. 

Comparing \texttt{SPY} and \texttt{QQQ}, we observe that \texttt{QQQ}-based LETFs generally exhibit more volatile and extreme $\mathtt{CE}$ fluctuations, as well as higher-amplitude AR(1) dynamics. This is consistent with the higher baseline volatility of the Nasdaq-100 and more frequent regime shifts in its constituent firms. For instance, \texttt{QQQ} $\mathtt{CE}$ spikes during both the 2020 pandemic rebound and late 2021 coincide with sharp increases in $\phi$, reinforcing the amplifying role of trend-following dynamics under high~leverage.

Overall, the rolling-window analysis reveals that LETF compounding behavior is not static, but dynamically linked to the prevailing autocorrelation environment. Positive short-run autocorrelation enhances compounding gains by aligning leverage with trending returns, while mean-reverting or noisy regimes diminish performance. These empirical observations further validate the central theoretical result of this paper: return persistence is a critical driver of LETF performance beyond volatility alone.

\begin{figure}[htbp]
	\centering
	\includegraphics[width=0.9\textwidth]{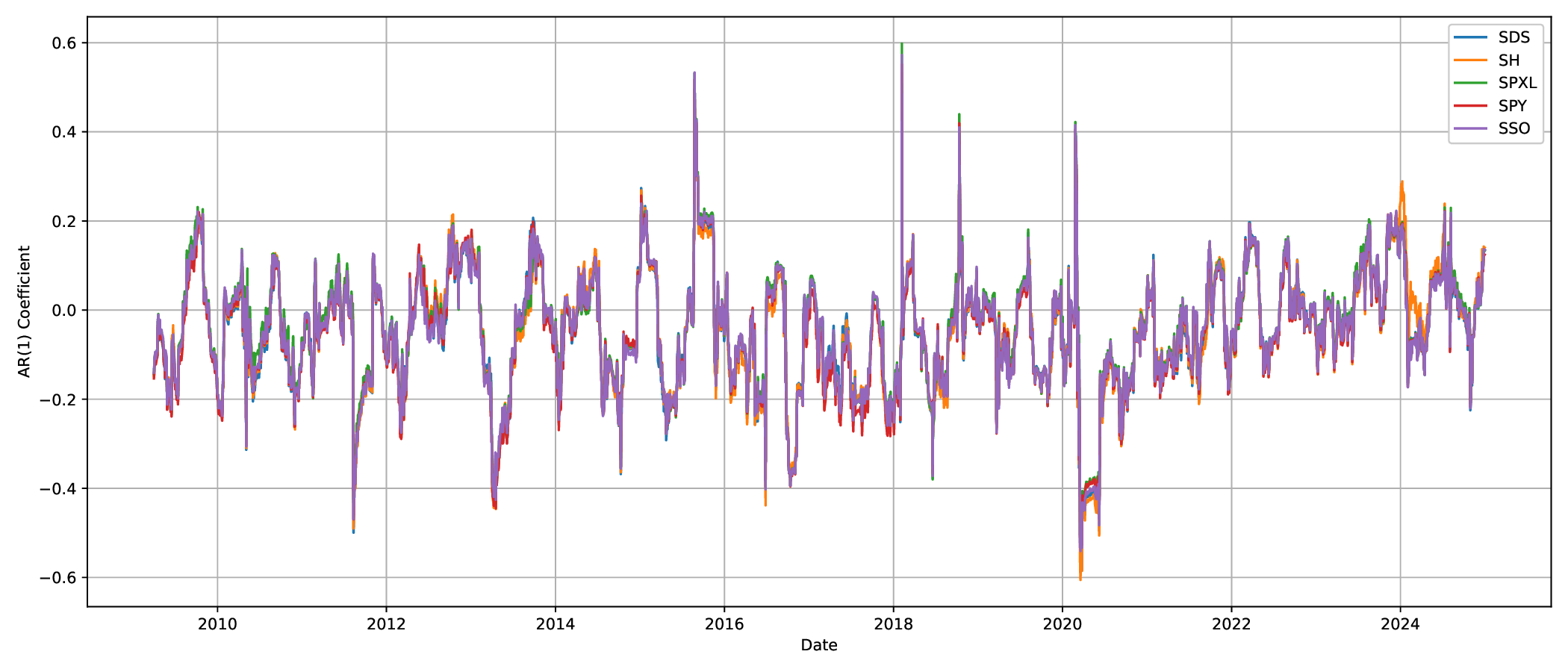}
	\caption{60-Day Rolling AR(1) Coefficients from 2009 to 2024 (\texttt{SPY})}
	\label{fig: 60-Day Rolling AR(1) Coefficients from 2009 to 2014 (SPY)}
\end{figure}

\begin{figure}[htbp]
	\centering
	\includegraphics[width=0.9\textwidth]{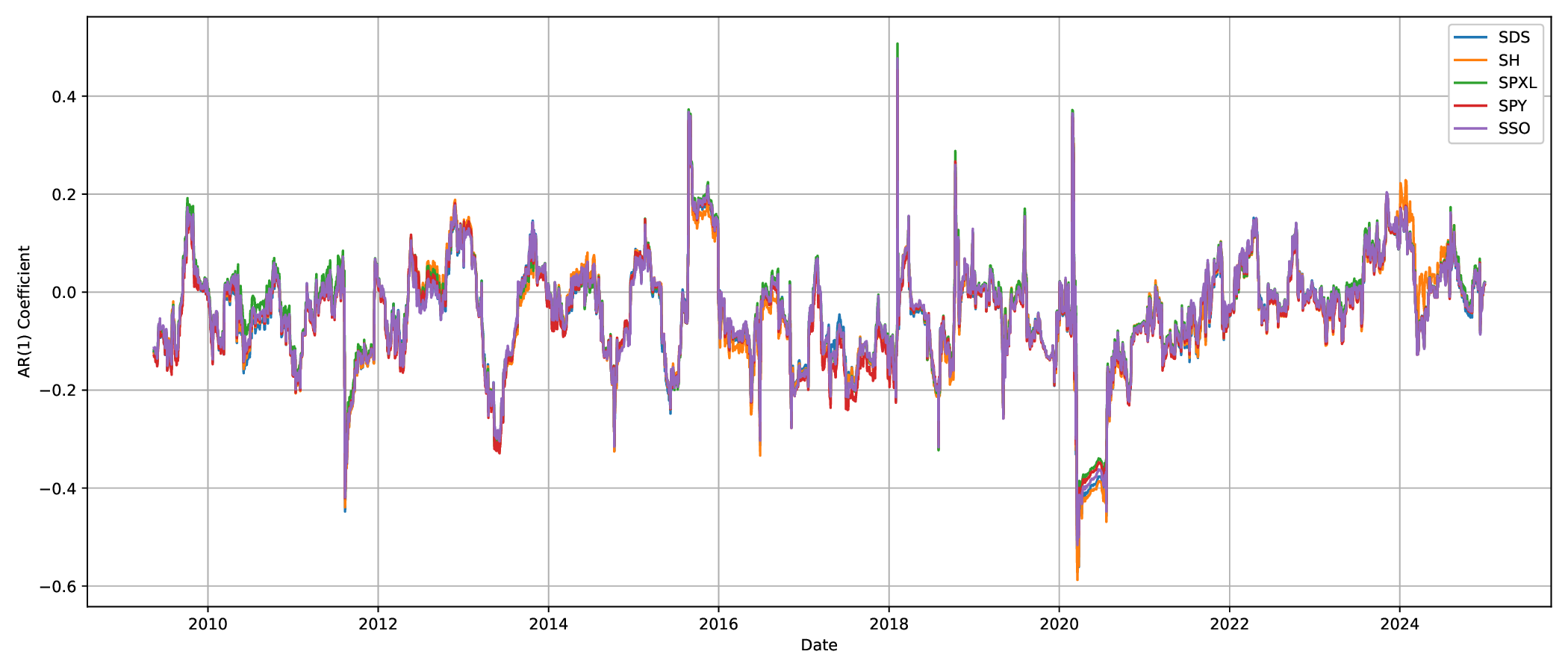}
	\caption{90-Day Rolling AR(1) Coefficients from 2009 to 2024 (\texttt{SPY})}
	\label{fig: 90-Day Rolling AR(1) Coefficients from 2009 to 2014 (SPY)}
\end{figure}

\begin{figure}[htbp]
	\centering
	\includegraphics[width=0.9\textwidth]{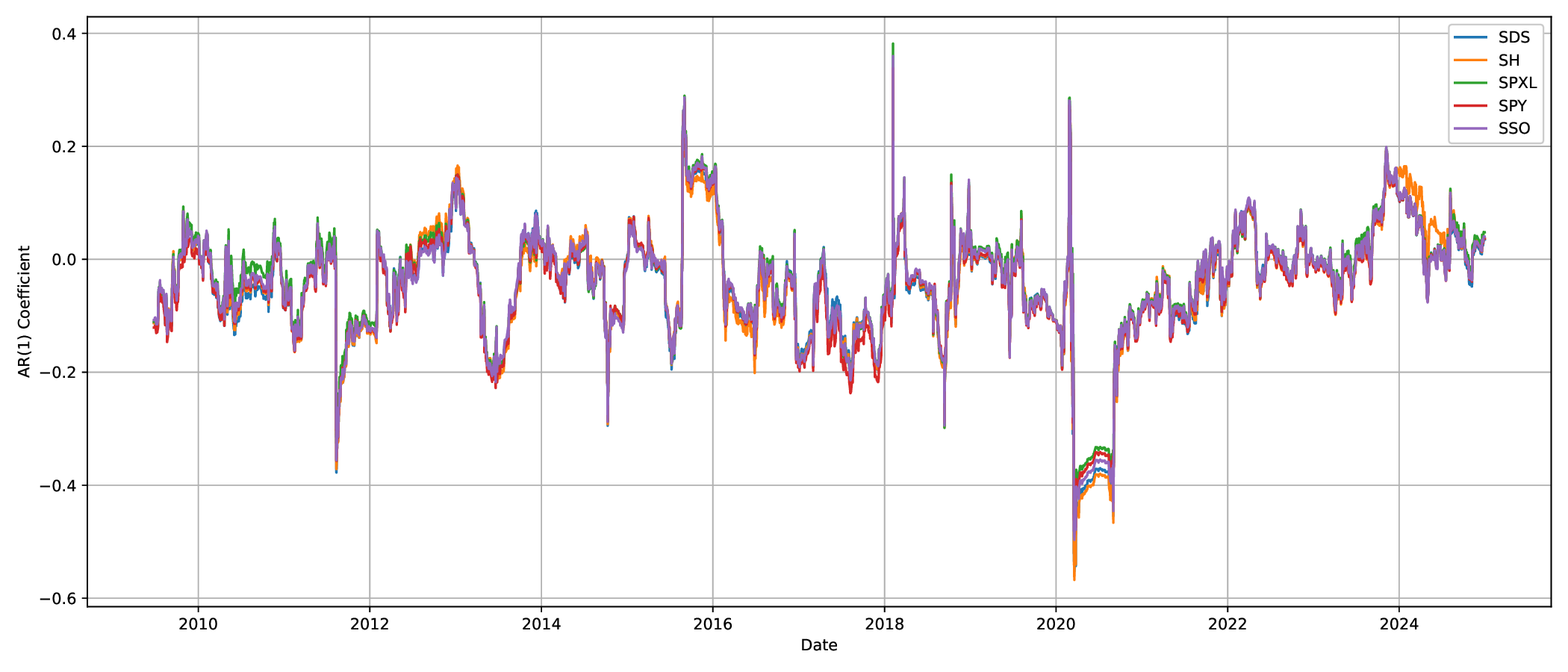}
	\caption{120-Day Rolling AR(1) Coefficients from 2009 to 2014 (\texttt{SPY})}
	\label{fig: 120-Day Rolling AR(1) Coefficients from 2009 to 2014 (SPY)}
\end{figure}

\begin{figure}[htbp]
	\centering
	\includegraphics[width=0.9\textwidth]{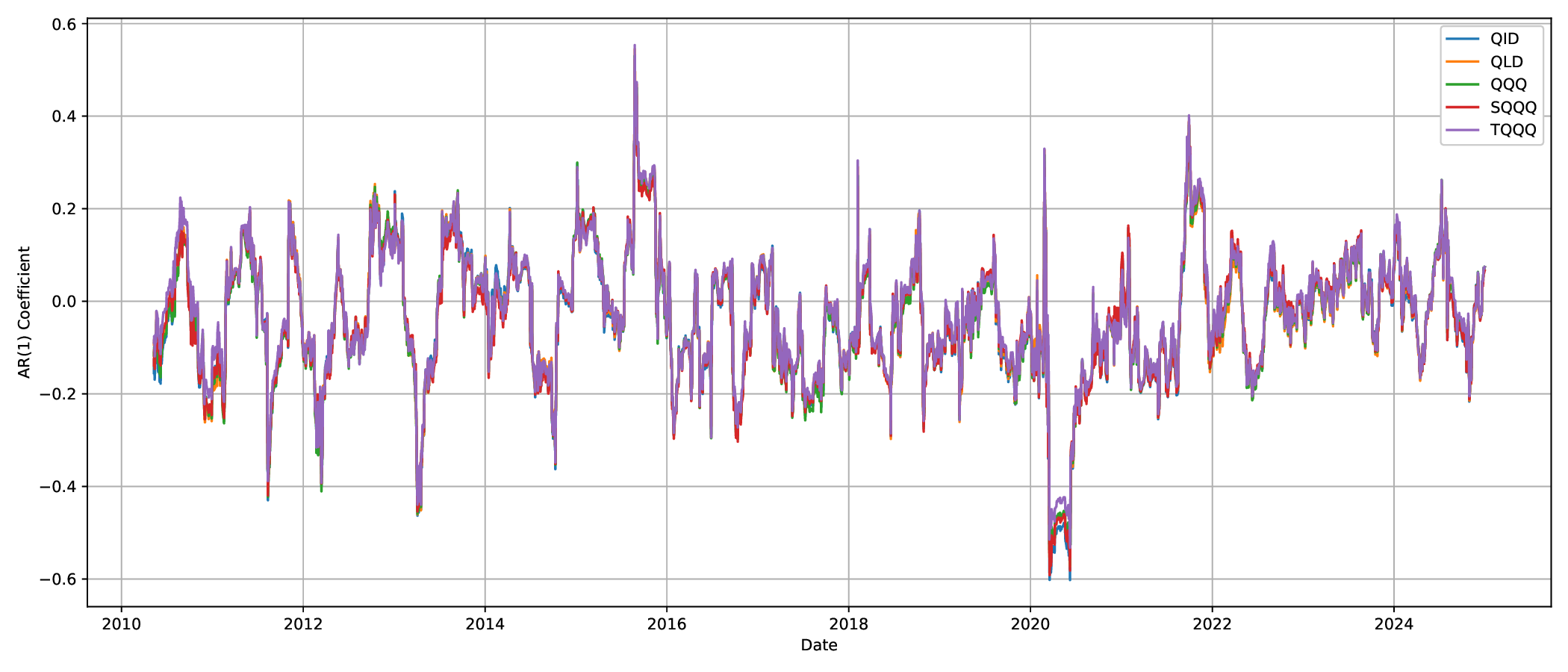}
	\caption{60-Day Rolling AR(1) Coefficients from 2010 to 2024 (\texttt{QQQ})}
	\label{fig: 60-Day Rolling AR(1) Coefficients from 2010 to 2014 (QQQ)}
\end{figure}

\begin{figure}[htbp]
	\centering
	\includegraphics[width=0.9\textwidth]{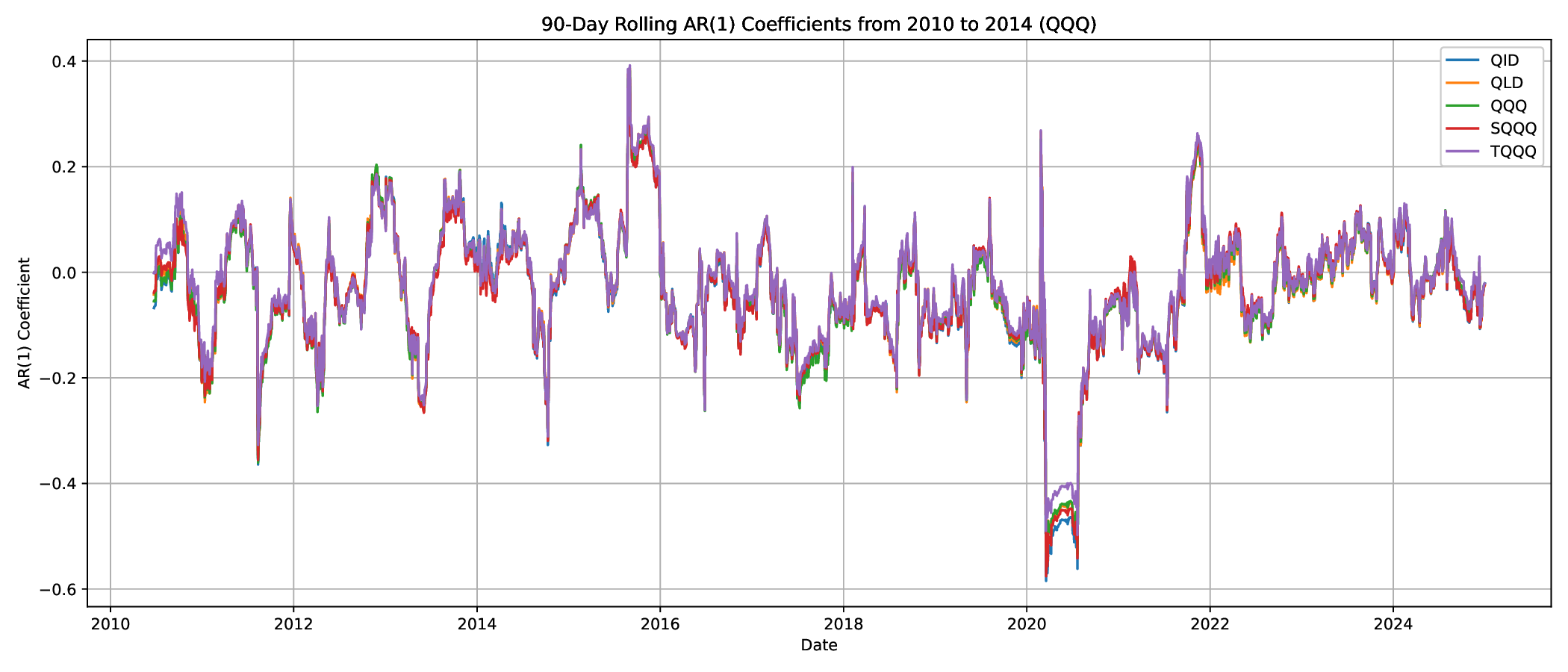}
	\caption{90-Day Rolling AR(1) Coefficients from 2010 to 2024 (\texttt{QQQ})}
	\label{fig: 90-Day Rolling AR(1) Coefficients from 2010 to 2014 (QQQ)}
\end{figure}

\begin{figure}[htbp]
	\centering
	\includegraphics[width=0.9\textwidth]{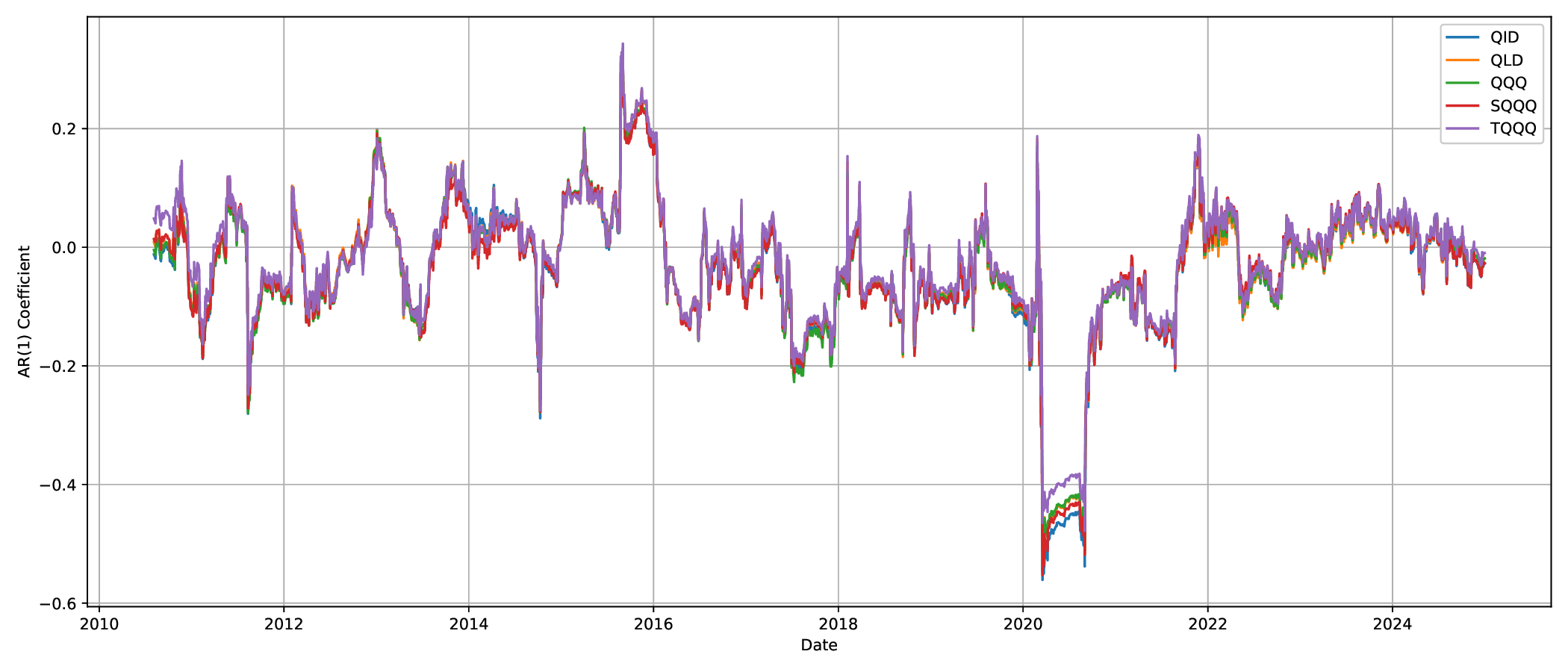}
	\caption{120-Day Rolling AR(1) Coefficients from 2010 to 2024 (\texttt{QQQ})}
	\label{fig: 120-Day Rolling AR(1) Coefficients from 2010 to 2014 (QQQ)}
\end{figure}

\section{Concluding Remarks}\label{section: conclusion}

This paper explores the compounding effect of LETFs under various return dynamics, volatility levels, and rebalancing frequencies.  Our central finding is that return autocorrelation and return dynamics---not the volatility alone---determine whether LETFs outperform or underperform their targets. In particular, momentum improves compounding, while mean reversion undermines it, with these effects magnified under frequent rebalancing.

We develop a unified theoretical framework encompassing AR(1), AR-GARCH, and continuous-time regime-switching models. These models reveal that volatility clustering interacts with autocorrelation to shape LETF behavior, particularly under persistent shocks or regime transitions. This insight refines the conventional volatility-drag narrative by emphasizing the joint role of autocorrelation and volatility persistence.

In the benchmark environments with independent returns, LETFs tend to outperform their target multiples on average, and changes in volatility or rebalancing frequency have minimal impact. In contrast, in serially correlated markets, LETF performance is sensitive to the autocorrelation, return dynamics, and the rebalancing interval. Daily rebalancing enhances returns in trending markets, whereas weekly or monthly rebalancing mitigates losses in oscillating or mean-reverting~regimes.

Empirical validation using about 20 years of \texttt{SPY} and \texttt{QQQ} data across diverse market regimes confirms these predictions. LETFs exhibit positive compounding during directional trends, but underperform in sideways markets where rebalancing erodes returns. For trend-following investors, daily-rebalanced LETFs offer superior performance. In contrast, when autocorrelation is weak or negative, longer rebalancing intervals reduce performance decay. Future work could extend this framework by incorporating rough volatility models or non-Markovian regime-switching dynamics to capture finer features of real-world return processes.

\bibliographystyle{apalike}
\bibliography{references}

\medskip
\appendix

	\section{Technical Proofs} \label{appendix: technical proofs}
	This appendix collects some technical proofs of the paper.
	
	\subsection{Proofs in Section~\ref{section: Preliminaries}} \label{appendix: technical proofs in Preliminaries}

	\begin{proof}[Proof of Theorem~\ref{theorem: Compounding Effect Derivation}]
		We begin by noting that the ETF’s daily return is  
		$
		X_t^{\rm ETF} =  X_t
		$
		and LETF’s daily return is to include fees and tracking error:
		$
		X_t^{\rm LETF} =  \beta X_t - f + e_t,
		$
		where \(\beta \in \mathbb{Z} \setminus \{0\}\) is the leverage ratio, \(f\) is a daily fee (incorporating expense and transaction costs), and \(e_t\) is a tracking error with \(|e_t| \le M\) and \(\mathbb{E}[e_t] = \mathbb{E}[e_t X_s] = \mathbb{E}[e_t e_s] = 0\) for \(t \ne s\). Assume also \(|X_t|, |f| \le M\), for some \(M \ll 1\).
		Define:
		\[
		A_t := \beta X_t - f + e_t, \quad B_t := X_t.
		\]
		Hence, the compounding effect is:
		$
		\mathtt{CE}_n := R_n^{{\rm LETF}} - \beta R_n^{{\rm ETF}}
		$ 
		where the cumulative returns over $n$ days are:
		\[
		R_n^{\mathrm{LETF}} = \prod_{t=1}^n (1 + A_t) - 1 \; \text{ and } \; R_n^{\mathrm{ETF}} = \prod_{t=1}^n (1 + B_t) - 1.
		\]
		Using the fact that 
		$
		\prod_{t=1}^n (1 + Z_t) = 1 + \sum_{t=1}^n Z_t + \sum_{1 \le t < s \le n} Z_t Z_s + \mathfrak{R}_n$ with $ \mathbb{E}[| \mathfrak{R}_n| ] \le C^\prime n^3 M^3
		$ for some $C^\prime >0$,
		it follows that $R_n^{\mathrm{LETF}}$ and $R_n^{\mathrm{ETF}}$ satisfies
		\begin{align*}
			R_n^{\rm LETF} &= \sum_{t=1}^n A_t + \sum_{1 \le t < s \le n} A_t A_s + \mathfrak{R}_n^{(A)}, \\
			R_n^{\rm ETF}  &= \sum_{t=1}^n B_t + \sum_{1 \le t < s \le n} B_t B_s + \mathfrak{R}_n^{(B)},
		\end{align*}
		and therefore
		\begin{align} \label{eq: CE_n formula}
			\mathtt{CE}_n = R_n^{\rm LETF} - \beta R_n^{\rm ETF} =   \sum_{t=1}^n (A_t - \beta B_t) + \sum_{1 \le t < s \le n} (A_t A_s - \beta B_t B_s) + \mathfrak{R}_n,
		\end{align}
		where $ \mathfrak{R}_n := \mathfrak{R}_n^{(A)} - \beta \mathfrak{R}_n^{(B)} $, and \(\mathbb{E}[|\mathfrak{R}_n|] \le C n^3 M^3 \) for some constant $C >0$ by triangle inequality.
		Note that the first-order term in Equation~\eqref{eq: CE_n formula} can be computed as follows:
		\[
		\sum_{t=1}^n (A_t - \beta B_t) = \sum_{t=1}^n (\beta X_t - f + e_t - \beta X_t) = -nf + \sum_{t=1}^n e_t.
		\]
		Taking expectations and using \(\mathbb{E}[e_t] = 0\), we obtain:
		$
		\mathbb{E}\left[ \sum_{t=1}^n (A_t - \beta B_t) \right] = -nf.
		$
		Next, for $t\neq s$, we expand:
		\begin{align*}
			A_t A_s 
			&= (\beta X_t - f + e_t)(\beta X_s - f + e_s)\\
			&= \beta^2 X_t X_s - \beta f (X_t + X_s) + f^2 + \beta X_t e_s + \beta X_s e_t - f(e_t + e_s) + e_t e_s.
		\end{align*}
		Thus,
		$
		A_t A_s - \beta B_t B_s 
		= \beta(\beta - 1) X_t X_s - \beta f (X_t + X_s) + f^2 + \beta (X_t e_s + X_s e_t) - f (e_t + e_s) + e_t e_s.
		$
		Taking the expectation yields
		\[
		\mathbb{E}[A_t A_s - \beta B_t B_s] = \beta(\beta - 1) \gamma_{|t-s|} - 2\beta f \mu + f^2.
		\]
		where the equality holds by using the facts that \(\mathbb{E}[X_t X_s] = \gamma_{|t-s|}\) from strict stationarity,  \(\mathbb{E}[e_t] = \mathbb{E}[e_t X_s] = \mathbb{E}[e_t e_s] = 0\) for \(t \ne s\), and \(\mu := \mathbb{E}[X_t]\).
		Summing over all \(1 \le t < s \le n\), we group terms by lag \(k = s - t\), which contributes \((n-k)\) terms per lag. This gives:
		\[
		\mathbb{E} \left[	\sum_{1 \le t < s \le n}A_t A_s - \beta B_t B_s \right]
		= \sum_{1 \le t < s \le n} \mathbb{E}[A_t A_s - \beta B_t B_s]
		= \sum_{k=1}^{n-1} (n-k) \left[ \beta(\beta - 1) \gamma_k - 2\beta f \mu + f^2 \right].
		\]
		Putting everything together:
		\[
		\mathbb{E}[\mathtt{CE}_n]
		= -nf + \sum_{k=1}^{n-1} (n-k) \left[ \beta(\beta - 1) \gamma_k - 2\beta f \mu \right]+ \binom{n}{2} f^2 + O(n^3 M^3). \qedhere
		\]
	\end{proof}

	\begin{proof}[Proof of Lemma~\ref{lemma: Positive Expected Compounding Effect}] 
		To prove the desired positivity, we proceed with a proof by induction. 
		Firstly, for $n=1$, note that
		$
		\mathbb{E}[\mathtt{CE}_1] = 	\mathbb{E}\left[ R_{1}^{\rm LETF, \beta} - \beta R_{1}^{\rm ETF} \right] = [(1 + \beta \mu)^{1} - 1] - \beta[(1+ \mu)^{1} - 1 ] =0.
		%=  \beta\mu  -\beta \mu = 0 \geq 0.
		$
		%	This, the claim holds for~$n=1.$
		Now, assume the statement holds for~$n = k \geq 1$. That is,
		$
		[(1 + \beta \mu)^{k}-1]-\beta[(1+\mu)^{k}-1] \geq 0, 
		$
		which is equivalent to
		$
		(1 + \beta\mu)^{k} + \beta - 1 \geq \beta(1+\mu)^{k}.
		$
		We must show that the statement holds for $n = k+1$. Indeed, observe that
		\begin{align}
			(1 + \beta\mu)^{k+1}-1 
			=&(1+\beta\mu)(1+\beta\mu)^{k}-1\nonumber\\
			=&(1+\beta\mu)[(1+\beta\mu)^{k}+\beta-1]-\beta-\beta^{2}\mu+\beta\mu\nonumber\\
			\geq&(1+\beta\mu)[\beta(1+\mu)^{k}]-\beta-\beta^{2}\mu+\beta\mu\nonumber
		\end{align}
		where inequality is held by invoking the inductive hypotheses. Hence, it follows that
		\begin{align}
			(1 + \beta\mu)^{k+1}-1 
			\geq&(1+\beta\mu)[\beta(1+\mu)^{k}]-\beta-\beta^{2}\mu+\beta\mu\nonumber\\
			=&(1+ \mu +(\beta-1)\mu)[\beta(1+\mu)^{k}]-\beta-\beta^{2}\mu+\beta\mu\nonumber\\
			=&\beta(1+\mu)^{k+1}+\beta(\beta-1)\mu(1+\mu)^{k}-\beta-\beta^{2}\mu+\beta\mu\nonumber\\
			=&\beta[(1+\mu)^{k+1}-1]+\beta(\beta-1)\mu(1+\mu)^{k}+\beta\mu(1-\beta)\nonumber\\
			=&\beta[(1+\mu)^{k+1}-1]+\beta\mu(\beta-1)[(1+\mu)^{k}-1].
			% =&R.H.S+\beta\mu(\beta-1)[(1+\mu)^{k}-1]\nonumber.
		\end{align}
		Note that if $\beta \mu (\beta-1)[(1+\mu)^{k}-1] \geq 0$, then we are done. 
		Therefore, to complete the proof, it remains to prove $\beta \mu( \beta-1)[(1+\mu)^{k}-1] \geq 0$. Set an auxiliary function 
		$
		f(\beta, \mu) := \beta\mu(\beta-1)[(1+\mu)^{k}-1],
		$
		then, we consider the following four cases:
		
		\emph{Case 1.} When $\beta > 1$ and $\mu \geq 0$, it follows that $\beta \mu \geq 0$, $\beta - 1 > 0$, and $[(1 + \mu)^{k+1} - 1] \geq 0$, leading to $f(\beta, \mu) \geq 0$.
		
		\emph{Case 2.} When $\beta > 1$ and $\mu < 0$, we have $\beta \mu < 0$, $\beta - 1 > 0$, and $[(1 + \mu)^{k+1} - 1] < 0$, resulting in $f(\beta, \mu) > 0$.
		
		\emph{Case 3.} When $\beta \leq -1$ and $\mu \geq 0$, it follows that $\beta \mu \leq 0$, $\beta - 1 < 0$, and $[(1 + \mu)^{k+1} - 1] \geq 0$, which implies $f(\beta, \mu) \geq 0$.
		
		\emph{Case 4.} When $\beta \leq -1$ and $\mu < 0$, we obtain $\beta \mu > 0$, $\beta - 1 < 0$, and $[(1 + \mu)^{k+1} - 1] < 0$, leading to $f(\beta, \mu) \geq 0$ again.
		
		Combined with the four cases above, we conclude that $f(\beta, \mu) \geq 0$ for all cases. Thus, the given statement holds for any $n \in \mathbb{N}$ and $\beta \in \mathbb{Z} \setminus \{0,1\}$, which proves that the expected compounding effect is always positive.
	\end{proof}

	\medskip
	\subsection{Compounding Effect for Serially Correlated AR(1) Returns} \label{appendix: compounding effect for serially correlated returns}
	Below, we shall work with the AR(1) model for $\{X_t: t \geq 1\}$.
	Let $\beta \in \mathbb{Z} \setminus \{0,1\}.$ For the two-period case, we have that:  $(i)$ If $\phi >0$, then  $\mathbb{E}[ \mathtt{CE}_2 ] >0$ and
	$(ii)$ If $\phi <0$, then $ \mathbb{E}[ \mathtt{CE}_2] < 0$.
	To see this, we begin by noting that
	\begin{align*}
		\mathtt{CE}_2 
		&= [(1+\beta X_1)(1+\beta X_2) - 1] - \beta [ (1+X_1)(1+X_2)-1] \\
		%	&= [ \beta X_1 + \beta X_2 + \beta^2 X_1 X_2] - \beta [ X_1 + X_2 + X_1X_2]\\
		&=\beta( \beta - 1) X_1 X_2 .
	\end{align*}
	Taking the expectation on both sides yields:
	$
	\mathbb{E}[ \mathtt{CE}_2  ]
	= \beta( \beta - 1) \mathbb{E}[ X_1 X_2]
	= \beta (\beta - 1) \phi \frac{\sigma}{1-\phi^2}.  
	$
	Note that $\sigma > 0$, and for $\beta \in \mathbb{Z} \setminus \{0, 1\}$, the product $\beta (\beta - 1) >0$.  Therefore, to determine the sign of the expected difference, it suffices to check the sign of the $\phi$. Specifically, for $\phi >0$, then  $\mathbb{E}[  \mathtt{CE}_2  ] > 0$.
	On the other hand, for $\phi < 0$, then $\mathbb{E}[  \mathtt{CE}_2 ] < 0$. The following proof extends this result to the multi-period case.

	\begin{proof}[Proof of Lemma~\ref{lemma: n-period case for serially correlated AR(1) returns}]   
		
		Given that $\beta \in \mathbb{Z}\setminus\{0, 1\}$, we observe that
		\begin{align*}
			\mathbb{E} [\mathtt{CE}_n ] 
			&= \mathbb{E} \left[\left(\prod_{i=1}^n (1+\beta X_i) - 1\right) - \beta \left( \prod_{i=1}^n (1+X_i) - 1\right)\right] \\
			&= \mathbb{E}  \left[(\beta^2 - \beta)\sum_{i \neq j} X_i X_j + (\beta^3 - \beta)\sum_{i \neq j \neq k} X_i X_j X_k + \cdots + (\beta^n - \beta)\sum_{i \neq j \neq \cdots \neq n} X_i X_j X_k \cdots X_n \right]\\
			&= \mathbb{E}  \left[(\beta^2 - \beta)\sum_{i \neq j} X_i X_j + \text{higher-order terms in $X_i$} \right].
		\end{align*}
		Since $ |X_i| < m$ for some $m \in (0,1)$, the first quantity dominates, and the higher-order product term is of smaller order compared to the second-order term for sufficiently small returns. Therefore, we~obtain
		% $X_i \rightarrow 0 \text{ for } n \geq 3 $$
		\begin{align}
			\mathbb{E} [ \mathtt{CE}_n] 
			&\approx \mathbb{E} \left[ (\beta^2 - \beta)\sum_{i \neq j} X_i X_j\right] \notag \\
			&= (\beta^2 - \beta)\left( \sum_{i=1}^{n-1} \mathbb{E}[X_i X_{i+1}] + \sum_{i=1}^{n-2} \mathbb{E}[X_i X_{i+2}] + \cdots + \sum_{i=1}^{1} \mathbb{E} [X_i X_{i+(n-1)}]\right), \notag \\
			&= (\beta^2 - \beta)\frac{\sigma^2}{1-\phi^2}\left( \sum_{i=1}^{n-1} \phi + \sum_{i=1}^{n-2} \phi^2 + \cdots + \sum_{i=1}^{1} \phi^{n-1}\right), \notag \\
			&= (\beta^2 - \beta)\frac{\sigma^2}{1-\phi^2}\phi\left[(n-1) + (n-2)\phi + \cdots + \phi^{n-2}\right] \gtrless 0 \text{ if } \phi \gtrless 0  \label{ineq: proof of lemma 3}
		\end{align}
		where $|\phi| < 1$. Note that since $\beta \in \mathbb{Z} \setminus \{0,1\}$, it follows that $(\beta^2 - \beta) = \beta (\beta - 1) >0$.
		To see inequality~\eqref{ineq: proof of lemma 3}, let $Q:=(n-1)+(n-2)\phi+\cdots+\phi^{n-2}$.
		% and $\phi Q=(n-1)\phi+(n-2)\phi^2+\cdots+\phi^{n-1}$.
		Then, for sufficiently large $n$, $\mathbb{E} [ \mathtt{CE}_n ] = (\beta^2 - \beta)\frac{\sigma^2}{1-\phi^2}\phi Q$,  it is readily verified that
		$
		\phi Q - Q = \phi + \phi^2 + \cdots + \phi^{n-2} + \phi^{n-1} - (n-1).
		$
		Therefore, for all $|\phi| < 1$,
		\begin{align*}
			Q &=  \frac{n -1+\phi+\phi^2+\cdots+\phi^{n-2}+\phi^{n-1}}{1-\phi} \\
			&= \frac{(1-1)+(1-\phi)+(1-\phi^2)+\cdots+(1-\phi^{n-2})+(1-\phi^{n-1})}{1-\phi} > 0.
		\end{align*}
		Therefore, $\mathbb{E} [ \mathtt{CE}_n ] \gtrless 0$ if $\phi \gtrless 0$.
	\end{proof}

	\subsection{Compounding Effect for Continuous-Time Model} \label{appendix: Compounding Effect for Continuous-Time Return Model}

	\begin{proof}[Proof of Equation~\eqref{eq: LETF L_t_regime_switching}] 
		Begin by observing that
		\begin{align*}
			\frac{d L_t }{L_t} 
			&= \beta \frac{dS_t}{S_t}   - fdt
			%		&= \beta[ \mu_t dt + \sigma_t  dW_t ]  - fdt\\
			= (\beta \mu_{Z_t}  - f)dt + \beta \sigma_{Z_t} dW_t.
		\end{align*}
		Given that $ f(L_t) := \log L_t $, It\^o's Lemma states:
		\begin{align*}
			d \log(L_t )
			& = \left( \frac{1}{ L_t} L_t( \beta \mu_{Z_t} - f) - \frac{1}{2} \frac{1}{ L_t^2} (L_t^2 \beta^2 \sigma_{Z_t}^2) \right) dt + \frac{1}{ L_t} L_t \beta \sigma_{Z_t} dW_t\\
			& = \left( (\beta \mu_{Z_t}-f) - \frac{1}{2}  \beta^2 \sigma_{Z_t}^2 \right) dt +  \beta \sigma_{Z_t}dW_t.
		\end{align*}
		Hence, it has a solution
		\begin{align*}
			L_t 
			&= L_0 \exp\left( \int_0^t \left(  \beta \mu_{Z_s} - \frac{1}{2}  \beta^2 \sigma_{Z_s}^2 \right) ds - ft +  \beta \int_0^t \sigma_{Z_s} dW_s \right)
			%		&= L_0 \exp\left( \int_0^t   (\beta \mu_s  - \tfrac{\beta}{2}\sigma_s^2) ds  +  \beta \int_0^t \sigma_s dW_s \right)	\exp\left( \int_0^t \left(   \tfrac{\beta}{2}\sigma_s^2- \frac{1}{2}  \beta^2 \sigma_s^2 \right) ds - ft  \right)\\
			%		&= L_0 \left( \frac{S_t}{S_0} \right)^\beta \exp\left( \frac{\beta - \beta^2}{2}\int_0^t    \sigma_s^2 ds - ft  \right)
		\end{align*}
		and the proof is complete.
	\end{proof}

	\begin{proof}[Proof of Theorem~\ref{theorem: expected compounding effect for diffusion prices_regime}]   
		Note that 
		$$
		\mathbb{E}[\mathtt{CE}_t] = \mathbb{E}[\exp(Y_t)] - \beta \mathbb{E}[\exp(X_t)] + (\beta - 1),
		$$
		where $Y_t  = \int_0^t \left(  \beta \mu_{Z_s} - \frac{1}{2}  \beta^2 \sigma_{Z_s}^2 \right) ds - ft +  \beta \int_0^t \sigma_{Z_s} dW_s$ and $X_t = \int_0^t \left( \mu_{Z_s}  - \frac{1}{2}   \sigma_{Z_s}^2 \right) ds + \int_0^t \sigma_{Z_s} dW_s.$
		We approximate the expectation using a regime-mixture approximation based on the occupation measure:
		Let $\pi_j(t) := \mathbb{E}\left[ \frac{1}{t} \int_0^t \mathbb{1}_{\{Z_s = j\}} ds \right]$ for each $j=1,\dots,M$. This is the expected proportion of time the process spends in regime $j$ over $[0,t]$.
		If $Z_t = j$, then
		\begin{align*}
			X_t^{(j)} 
			&=\int_0^t \left( \mu_j  - \frac{1}{2}   \sigma_j^2 \right) ds + \int_0^t \sigma_j dW_s\\
			& =  \left( \mu_j  - \frac{1}{2}   \sigma_j^2 \right)t + \sigma_j W_t \sim \mathcal{N}\left(\left( \mu_j  - \frac{1}{2}   \sigma_j^2 \right)t, \sigma_j^2 t \right)
		\end{align*}
		and
		\begin{align*}
			Y_t^{(j)} 
			&= \int_0^t \left(  \beta \mu_{j} - \frac{1}{2}  \beta^2 \sigma_{j}^2 \right) ds - ft +  \beta \int_0^t \sigma_{j} dW_s \\
			&= (\beta \mu_j -\frac{1}{2}  \beta^2 \sigma_{j}^2 - f )t + \beta \sigma_j W_t \sim \mathcal{N}\left( (\beta \mu_j -\frac{1}{2}  \beta^2 \sigma_{j}^2 - f )t, \,  \beta^2 \sigma_j^2 t  \right).
		\end{align*}
		Hence, using the moment generating function technique, it follows that
		\begin{align*}
			\mathbb{E}[ \exp({X_t}^{(j)}) ] = \exp\left(  \mu_j   t\right) 
			\text{ and }
			\mathbb{E}[ \exp({Y_t}^{(j)} )] = \exp\left( (\beta \mu_j  - f )t \right)
		\end{align*}
		Because $Z_t$ switches between regimes over time, the full $Y_t$ is a weighted combination of regime-specific dynamics. The exact expectation $\mathbb{E}[\exp(X_t)]$ and $\mathbb{E}[\exp(Y_t)]$ can be  approximated\footnote{In principle, one can compute $\mathbb{E}[\exp(X_t)]$ and $\mathbb{E}[\exp(Y_t)]$ condition on given $Z_s = j$. Then invoke the Feynman-Kac theorem to solve the resulting PDEs, e.g., see \cite{cvitanic2004introduction}. However, solving the PDEs can be complex. } as follows:
		\begin{align*}
			\mathbb{E}[\exp(X_t)] 
			&\approx \sum_{j=1}^{M} \pi_j(t) \cdot \mathbb{E}[\exp ( X_t^{ (j) })] = \sum_{j=1}^{M} \pi_j(t) \cdot \exp\left( \mu_j  t \right)\\
			\mathbb{E}[\exp(Y_t)] 
			& \approx \sum_{j=1}^{M} \pi_j(t) \cdot \mathbb{E}[\exp ( Y_t^{ (j) })] = \sum_{j=1}^{M} \pi_j(t) \cdot \exp\left( (\beta \mu_j  - f )t \right).
		\end{align*}
		Therefore, the expected compounding effect is obtained:
		\[
		\mathbb{E}[\mathtt{CE}_t] \approx \sum_{j=1}^{M} \pi_j(t) \left[  \exp\left( (\beta \mu_j  - f )t \right) - \beta \exp\left(  \mu_j t \right) \right] + (\beta - 1). \qedhere
		\]
	\end{proof}

	\begin{lemma}[Expected Occupation Measure] \label{lemma: expected occupation measure}
		The expected occupation measure
		$\sum_{i=1}^M \pi_j(t) = 1$ for all $ t \geq 0$.
	\end{lemma}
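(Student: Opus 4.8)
The plan is to use that the Markov chain $\{Z_s\}_{s\ge 0}$ is valued in the finite state space $\{1,\dots,M\}$, so that at every time $s$ the indicators $\{\mathbb{1}_{\{Z_s=j\}}\}_{j=1}^{M}$ form a partition of unity: for each sample path $\omega$ and each $s\in[0,t]$, exactly one index $j$ satisfies $Z_s(\omega)=j$, hence $\sum_{j=1}^{M}\mathbb{1}_{\{Z_s=j\}}=1$ pointwise in $(\omega,s)$. Everything else is bookkeeping.

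First I would integrate this pointwise identity over $s\in[0,t]$ and divide by $t>0$, obtaining $\frac{1}{t}\int_0^t \sum_{j=1}^{M}\mathbb{1}_{\{Z_s=j\}}\,ds=\frac{1}{t}\int_0^t 1\,ds=1$ almost surely. Next I would take expectations on both sides. Since the sum over $j$ is finite, linearity of expectation moves $\mathbb{E}[\cdot]$ through it; and since the integrand is nonnegative, jointly measurable, and bounded by $1$ on the finite-measure space $\Omega\times[0,t]$ (the product of $\mathbb{P}$ with Lebesgue measure), Tonelli's theorem justifies interchanging $\mathbb{E}[\cdot]$ with the time integral. Combining these steps gives $\sum_{j=1}^{M}\mathbb{E}\!\left[\frac{1}{t}\int_0^t \mathbb{1}_{\{Z_s=j\}}\,ds\right]=\mathbb{E}[1]=1$, that is, $\sum_{j=1}^{M}\pi_j(t)=1$, as claimed.

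There is no real obstacle in this argument; the only step deserving an explicit word is the interchange of expectation and the time integral, which follows immediately from Tonelli since the integrand is a nonnegative bounded jointly measurable function. For the boundary case $t=0$, the quantity $\pi_j(0)$ is understood via the natural convention $\pi_j(0):=\mathbb{P}(Z_0=j)$ (the limit as $t\downarrow 0$ of $\pi_j(t)$), and $\sum_{j=1}^{M}\mathbb{P}(Z_0=j)=1$ again because $Z_0$ is $\{1,\dots,M\}$-valued.
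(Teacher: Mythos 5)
Your proof is correct and follows essentially the same route as the paper: the pointwise partition-of-unity identity $\sum_{j=1}^{M}\mathbb{1}_{\{Z_s=j\}}=1$ followed by Fubini--Tonelli to interchange expectation with the finite sum and the time integral. The only (harmless) additions are the explicit justification of Tonelli via boundedness on a finite product measure space and the convention for $t=0$, which the paper leaves implicit.
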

	
	\begin{proof}[Proof of Lemma~\ref{lemma: expected occupation measure}]
		For every fixed sample point $\omega \in \Omega$ and any time $s$, the Markov chain is exactly one of the $M$ states. Hence, for all $s \geq 0$ and $\omega \in \Omega$,
		$
		\sum_{j=1}^M \mathbb{1}_{\{Z_s(\omega) = j\}} = 1 .
		$
		Hence, with the aid of the Fubini-Tonelli theorem, we have
		\begin{align*}
			\sum_{j=1}^M \pi_j(t) 
			&= \sum_{j=1}^M  \mathbb{E}\left[ \frac{1}{t} \int_0^t \mathbb{1}_{\{Z_s = j\}} ds \right]\\ 
			&= \mathbb{E}\left[  \sum_{j=1}^M \frac{1}{t} \int_0^t \mathbb{1}_{\{Z_s = j\}} ds \right]\\
			&= \mathbb{E}\left[   \frac{1}{t} \int_0^t \sum_{j=1}^M \mathbb{1}_{\{Z_s = j\}} ds \right]\\
			&= \mathbb{E}\left[   \frac{1}{t} \int_0^t 1 ds \right]
			= \mathbb{E}\left[   1 \right] = 1. \qedhere
		\end{align*}
	\end{proof}

	\begin{proof}[Proof of Theorem~\ref{theorem: Sign of Expected CE under Regime Switching}]
		This is an immediate consequence of the approximation:
		\[
		\mathbb{E}[\mathtt{CE}_t] \approx \sum_j \pi_j(t) \Phi_j(t; \beta, f) + (\beta - 1)
		\]
		where $\Phi_j(t; \beta, f)$ depends on both $\beta$ and $\mu_j$. The sign of the convex combination $\sum_j \pi_j(t) \Phi_j(t; \beta, f)$ determines the sign of CE, relative to the offset $(1 - \beta)$.
	\end{proof}

	\begin{proof}[Proof of Corollary~\ref{corollary: Nonnegative CE in Single Regime}]
		This follows directly from Theorem~\ref{theorem: Sign of Expected CE under Regime Switching}. 
		Fix $f=0$.	Note that for $t=0$, $	\mathbb{E}[\mathtt{CE}_t] =0$. Next, for $t >0$, 
		Take $x := \mu_j t \in \mathbb{R}$ where $x \geq 0$ if $\mu_j >0$ and $x \leq 0$ if $\mu_j <0$, and $x=0$ if $\mu = 0$. Hence, it suffices to show that
		$
		h_\beta(x):= \exp\left( \beta x  \right)  - \beta \exp(x) + (\beta - 1) \geq 0
		$ for all $x \in \mathbb{R}$.
		We now consider two cases:
		
		\textit{Case 1}. For $\beta > 1$, set the weights
		$
		\lambda_1 := \frac{1}{\beta}, \lambda_2 := 1 - \frac{1}{\beta}
		$
		then $\lambda_1 + \lambda_2 =1$ and $\lambda_1, \lambda_2 > 0.$ Since the exponential function $\exp(\cdot)$ is convex, applying the Jensen's inequality yields
		$
		\lambda_1 e^{\beta x } + \lambda_2 e^0 \geq e^{\lambda_1 \beta x + \lambda_2 \cdot 0} = e^x.
		$
		Multiplying~$\beta$ we obtain
		$
		e^{\beta x}-\beta e^{x}+(\beta-1) \geq 0
		$ for all $x \in \mathbb{R}$, 
		which shows that $\mathbb{E}[\mathtt{CE}_t] \geq 0$ for all $ t \geq 0$ when~$\beta >1.$
		
		\textit{Case 2}. For $\beta <0$, Take $\beta := -k$ with $k \in \mathbb{N}$. Then, we obtain
		$
		h_k(x) := e^{-k x} + k e^{x}-(k+1).
		$
		Note that the first derivative: $h_k'(x) = -ke^{-kx} + k e^x$, which vanishes at $x=0.$ Moreover, its second-order derivative: $
		h_k''(x) = k^2 e^{-kx} + k e^x >0.
		$
		Hence, $h_k$ is strictly convex and $x=0$ is the unique global minimizer of $h_k$. Value at the minimum is
		$
		h_k(0) = 1+ k - (k+1) = 0.
		$
		Since the minimum value is $0$, we have $h_k(x)\geq 0$ for every~$x \in \mathbb{R}$. Hence, it also implies that $\mathbb{E}[\mathtt{CE}_t] \geq 0$ for all $t$.
		
		Lastly, we consider the degenerate case $\mu = 0$. It implies that
		$
		\mathbb{E}[\mathtt{CE}_t] = e^0 - \beta e^0 + (\beta -1) = 0,
		$
		which is desired. Therefore, in all cases, $\mathbb{E}[\mathtt{CE}_t]  \geq 0$ for all $t \geq 0$ and equality holds when $t=0$, or when~$\mu = 0.$
	\end{proof}

	%%%%%%%%%%%%%%%%%%%%%%%%%%%%%%%%%%%%%%%%%%%%%%%%%%%%%%%%%%%%%%%%%%%%%%%%%%%%

\end{document}